\documentclass[pdflatex,sn-mathphys-num]{sn-jnl}

\usepackage{xcolor}
\usepackage{soul}
\sethlcolor{yellow}
\soulregister{\cite}{7}
\soulregister{\ref}{7}
\soulregister{\eqref}{7}
\soulregister{\label}{7}
\soulregister{\url}{7}

\usepackage{graphicx}
\usepackage{multirow}
\usepackage{amsmath,amssymb,amsfonts}
\usepackage{amsthm}
\usepackage{mathrsfs}
\usepackage{xcolor}
\usepackage{textcomp}
\usepackage{manyfoot}
\usepackage{booktabs}
\usepackage{placeins}
\usepackage{mathtools}
\usepackage{xurl}
\usepackage{array}
\usepackage{etoolbox}
\usepackage{adjustbox}
\hypersetup{hypertexnames=false}
\usepackage{float}

\makeatletter
\patchcmd{\@maketitle}{Corresponding author(s). E-mail(s): }{E-mail: }{}{\ClassError{sn-jnl-patch}{e-mail line patch failed}{}}
\patchcmd{\@maketitle}{\removelastskip\vskip24pt}{\removelastskip\vskip12pt}{}{}
\patchcmd{\@maketitle}{\removelastskip\vskip24pt}{\removelastskip\vskip12pt}{}{}
\newcount\authemailcnt
\def\email#1{\global\advance\emailcnt by 1\relax%
\if@corauemail
   \g@addto@macro\corrauthemail{\setcounter{footnote}{0}\textcolor{blue}{#1}}%
\else
   \ifnum\authemailcnt>0 \g@addto@macro\authemail{; }\fi
   \global\advance\authemailcnt by 1\relax
   \g@addto@macro\authemail{\setcounter{footnote}{0}\textcolor{blue}{#1}}%
\fi}
\makeatother

\theoremstyle{thmstyleone}%
\newtheorem{theorem}{Theorem}%
\newtheorem{proposition}{Proposition}%
\newtheorem{corollary}{Corollary}%

\AtBeginDocument{\raggedbottom}
\allowdisplaybreaks

\newcommand{\vE}{\mathbf{E}}
\newcommand{\vB}{\mathbf{B}}
\newcommand{\vS}{\mathbf{S}}
\newcommand{\vJ}{\mathbf{J}}
\newcommand{\vK}{\mathbf{K}}
\newcommand{\vA}{\mathbf{A}}
\newcommand{\vr}{\mathbf{r}}
\newcommand{\zero}{\mathbf{0}}
\newcommand{\nhat}{\hat{\mathbf{n}}}
\newcommand{\nc}{\hat{\mathbf{n}}_{c}}

\newcommand{\zhat}{\hat{\mathbf{z}}}
\newcommand{\xhat}{\hat{\mathbf{x}}}

\newcommand{\curl}{\nabla\times}
\newcommand{\dive}{\nabla\cdot}
\newcommand{\grad}{\nabla}
\newcommand{\gradt}{\nabla_{t}}
\newcommand{\pdt}[1]{\frac{\partial #1}{\partial t}}
\newcommand{\pdtt}[1]{\frac{\partial^{2} #1}{\partial t^{2}}}
\newcommand{\rt}{\rho_{\mathrm{tot}}}

\newcommand{\Jt}{\vJ_{\mathrm{tot}}}
\newcommand{\rC}{\rho_{C}}

\newcommand{\sigfs}{\sigma_{\mathrm{f},s}}
\newcommand{\Kfs}{\vK_{\mathrm{f},s}}

\newcommand{\eps}{\varepsilon_{0}}
\newcommand{\muz}{\mu_{0}}
\newcommand{\wme}{\omega^{2}\muz\eps}
\newcommand{\eiwt}{e^{-i\omega t}}

\begin{document}

\title[Scalar-Longitudinal Modes of Closed Cavities]{Scalar-Longitudinal Modes of Extended Electrodynamics in Perfectly Conducting Cavities and Trapped Modes of Divergence Cleaning}

\author*[1]{\fnm{N.} \sur{Rentzber}}\email{nrentzbe@uccs.edu}
\author[1]{\fnm{J.} \sur{Howard}}
\author[2,3]{\fnm{F. O.} \sur{Minotti}}\email{minotti@df.uba.ar}
\author[1]{\fnm{A.} \sur{Pinchuk}}
\author[4]{\fnm{G.} \sur{Modanese}}\email{giovanni.modanese@unibz.it}

\affil*[1]{\orgdiv{Department of Physics and Energy Science}, 

\orgname{University of Colorado}, \orgaddress{\city{Colorado Springs}, \postcode{80918}, \country{USA}}}

\affil[2]{\orgname{Universidad de Buenos Aires, Facultad de Ciencias Exactas y Naturales, Departamento de F\'{\i}sica}, \orgaddress{\city{Buenos Aires}, \country{Argentina}}}

\affil[3]{\orgname{CONICET-Universidad de Buenos Aires, Instituto de F\'{\i}sica Interdisciplinaria y Aplicada (INFINA)}, \orgaddress{\city{Buenos Aires}, \country{Argentina}}}

\affil[4]{\orgdiv{Faculty of Engineering}, 

\orgname{Free University of Bozen-Bolzano}, \orgaddress{\city{Bolzano}, \postcode{39100}, \country{Italy}}}

\abstract{\unboldmath In extended electrodynamics, the Lorenz condition is not imposed and ${C=\dive\vA+\muz\eps\,\partial\Phi/\partial t}$ is a dynamical field, which adds a scalar-longitudinal (SL) family to the TE, TM and TEM modes of a cavity. With $C=0$ on the wall of a closed perfectly conducting cavity, the classical resonances remain unchanged, while the SL modes are the Dirichlet eigenmodes of the volume, with $\vB=\zero$ and an electric field along $\grad C$. In uniform cavities they share the frequencies of the TM modes that vary along the axis, and for a connected wall the fundamental remains classical. Matching at a conductor described by its constitutive laws gives $\partial C/\partial n=0$ instead and couples the scalar-bearing modes to the Maxwell field. The results belong to the $C=0$ problem, the condition that separates the sectors and the one divergence cleaning imposes. The rectangular, circular cylinder and coaxial cavities provide benchmarks, together with the damping time that gives the slowest scalar mode its fastest decay.}

\keywords{Extended Electrodynamics, Cavity Resonators, Trapped Modes, Scalar-Longitudinal Modes, Boundary Conditions, Divergence Cleaning, Computational Electromagnetics}

\maketitle

\clearpage
\setcounter{tocdepth}{2}
\tableofcontents
\clearpage

\section{Introduction}
\label{sec:intro}

In a perfectly conducting cavity of uniform cross-section, the classical source-free field is a superposition of transverse-electric (TE), transverse-magnetic (TM) and, on multiply connected cross-sections, transverse-electromagnetic (TEM) standing modes~\cite{jackson,pozar,collin}. Extended electrodynamics adds a scalar-longitudinal (SL) sector by retaining the scalar that the Lorenz gauge sets to zero and allowing it to evolve as a dynamical field~\cite{ohmura,aharonovbohm,vanvlaenderen,hively2012,modanese2017,minotti2021q}. The companion paper, Ref.~\cite{rentzberguide}, derives the extended system and solves it for a closed waveguide with a specified scalar wall condition. The formulation, its energy law and the wall condition at a physical conductor are those of Minotti and Modanese~\cite{minotti2021q,minotti2023}. There, the SL modes share the TM cutoffs. The present paper carries that analysis into closed cavities, treating resonances for general shapes and giving explicit solutions for uniform cavities by adding conducting end walls to the guides. Derivations that carry over unchanged are cited from the companion paper.
\\

The paper does not claim that the SL sector is realized in nature. Local charge conservation, whose failure is the only source of the scalar, is among the best-tested principles in physics~\cite{okun,borexino,majorana}. The aim is to state what the extended framework predicts in a theoretical closed conducting cavity, and what the same mathematics says about numerical solvers that carry such a scalar.
\\

For total charge density $\rt$ and total current density $\Jt$, the extended field equations derived in Section~1.3 of Ref.~\cite{rentzberguide} are

\begin{equation}
\begin{aligned}
\dive\vE &= \frac{\rt}{\eps}-\pdt{C}, &\qquad \curl\vE &= -\pdt{\vB},\\[2pt]
\dive\vB &= 0, &\qquad \curl\vB &= \muz\Jt+\muz\eps\,\pdt{\vE}+\grad C.
\end{aligned}
\label{eq:eed}
\end{equation}
\smallskip

\noindent Here $C\equiv\dive\vA+\muz\eps\,\partial\Phi/\partial t$ and $c=1/\sqrt{\muz\eps}$. Setting $C\equiv0$ recovers the classical Maxwell equations. The extended theory also allows a departure from local charge conservation, measured by $\Lambda$ through
\begin{equation*}
\pdt{\rt}+\dive\Jt=\Lambda.
\end{equation*}

\noindent Taking the divergence of the extended Amp\`ere's law and using the extended Gauss's law gives the scalar wave equation
\begin{equation}
\nabla^{2}C-\muz\eps\,\pdtt{C}=-\muz\Lambda.
\label{eq:Cwave}
\end{equation}
\smallskip

\noindent When charge is locally conserved, $\Lambda=0$ and the scalar wave equation is homogeneous. Such sources do not drive the scalar, although initial or boundary data can still produce a nonzero scalar field.
\\

The same scalar links the cavity problem to hyperbolic divergence cleaning, Section~11 of Ref.~\cite{rentzberguide}, whose auxiliary scalar couples to the fields exactly as $C$ does in Eq.~\eqref{eq:eed} and carries errors in Gauss's law as damped waves~\cite{munz2000,dedner2002}. In a closed conducting cavity those waves are trapped, and with no damping, a cleaning speed $c$ and a zero scalar wall value the trapped modes are the SL modes of the cavity.

\subsection{Results and Plan}
\label{sec:plan}

With $C=0$ on the wall $S$ of a cavity $V$, the SL modes are the eigenmodes of the Dirichlet Laplacian of the volume, with $\vB=\zero$, $\vE$ along $\grad C$ and frequencies $\omega_{j}^{\mathrm{SL}}$,
\begin{equation}
\boxed{\;
-\nabla^{2}\phi_{j}=\lambda_{j}\phi_{j}\ \ \text{in }V,
\qquad
\phi_{j}\big|_{S}=0,
\qquad
\omega_{j}^{\mathrm{SL}}=c\sqrt{\lambda_{j}}\, .
\;}
\label{eq:key}
\end{equation}
In a cavity of uniform cross-section, each SL mode shares its frequency with the TM mode of the same indices and $p\geq1$. When the wall is connected, at least three electromagnetic modes lie below the first SL mode. Charge-conserving sources leave the scalar sector empty if the scalar starts from zero. At a conductor described by its constitutive laws the wall condition is $\partial C/\partial n=0$. No separated SL mode exists in a uniform cavity under that condition, the modes that carry a scalar couple to the Maxwell field, and their study is left for separate work. In a cleaning solver with a zero scalar wall value, the trapped part separates exactly from the Maxwell part for any cleaning speed $v$ and damping time $\tau$. The slowest trapped mode decays fastest when $1/2\tau=v\sqrt{\lambda_{1}}$.
\\

Sections~\ref{sec:problem} to~\ref{sec:theorem} formulate the problem, decompose the field and establish the spectral theorems. Section~\ref{sec:canonical} gives the solutions for the three canonical cavities, and Section~\ref{sec:cleaning} applies the results to trapped modes of divergence cleaning.

\section{The Extended Cavity Problem}
\label{sec:problem}

\subsection{Wall Conditions and Classical Modes}
\label{sec:classicalmodes}

Let $V$ be a bounded, connected, vacuum-filled cavity with a piecewise smooth, perfectly conducting wall $S$ and outward unit normal $\nhat$. The wall may consist of several disjoint conductors. The electric field parallel to the wall and the magnetic field perpendicular to it both vanish,

\begin{equation}
\nhat\times\vE=\zero,
\qquad
\nhat\cdot\vB=0
\qquad\text{on }S .
\label{eq:pec}
\end{equation}
\smallskip

\noindent With $\nc=-\nhat$ directed from the conductor into the cavity, the wall sources are

\begin{equation}
\sigfs=\eps\,\vE\cdot\nc,
\qquad
\Kfs=\frac{1}{\muz}\,\nc\times\vB .
\label{eq:wallsources}
\end{equation}

For $\omega\neq0$, the source-free Maxwell equations with time dependence $\eiwt$ reduce to

\begin{equation}
\nabla^{2}\vE+\wme\,\vE=\zero,
\qquad
\dive\vE=0
\quad\text{in }V,
\qquad
\nhat\times\vE=\zero
\quad\text{on }S,
\label{eq:vhelm}
\end{equation}
\smallskip

\noindent with $\vB=-(i/\omega)\curl\vE$. This is an eigenvalue problem for $\wme$, with a discrete positive spectrum and finite multiplicities~\cite{monk}. For a cavity of uniform cross-section, write $V=A\times(0,L)$ with the axis along $z$. Its modes are standing combinations of the guide modes in Ref.~\cite{rentzberguide}, with axial wavenumber $k_{z}=p\pi/L$, written $k$ in Ref.~\cite{rentzberguide}, and transverse eigenvalue

\begin{equation}
\gamma^{2}=\wme-k_{z}^{2}
\label{eq:gamma}
\end{equation}
\smallskip

Each family is generated by a transverse profile $\psi(x,y)$ satisfying $(\gradt^{2}+\gamma^{2})\psi=0$ in $A$. For the TM family, $\psi=0$ on the side wall and the fields are

\begin{equation}
\begin{aligned}
E_{z} &= E_{0}\,\psi\cos(k_{z}z)\,\eiwt,\\[2pt]
\vE_{t} &= -E_{0}\,\frac{k_{z}}{\gamma^{2}}\,
\sin(k_{z}z)\,\gradt\psi\;\eiwt,\\[2pt]
B_{z} &= 0,\\[2pt]
\vB_{t} &= iE_{0}\,\frac{\omega\muz\eps}{\gamma^{2}}\,
\cos(k_{z}z)\,\zhat\times\gradt\psi\;\eiwt,
\qquad p\geq0.
\end{aligned}
\label{eq:tmrel}
\end{equation}
The end walls select the cosine dependence of $E_{z}$ and allow $p=0$. For the TE family, $\partial\psi/\partial n=0$ on the side wall and the fields are

\begin{equation}
\begin{aligned}
B_{z} &= B_{0}\,\psi\sin(k_{z}z)\,\eiwt,\\[2pt]
\vB_{t} &= B_{0}\,\frac{k_{z}}{\gamma^{2}}\,
\cos(k_{z}z)\,\gradt\psi\;\eiwt,\\[2pt]
E_{z} &= 0,\\[2pt]
\vE_{t} &= -iB_{0}\,\frac{\omega}{\gamma^{2}}\,
\sin(k_{z}z)\,\zhat\times\gradt\psi\;\eiwt,
\qquad p\geq1.
\end{aligned}
\label{eq:terel}
\end{equation}
The sine vanishes at both end walls, and $p=0$ gives no field. If the cross-section is bounded by $N$ disjoint conductors, it supports $N-1$ TEM families. Each is generated by a harmonic potential $\Phi_{0}(x,y)$ that is constant on each conductor, with $\gamma=0$ and fields

\begin{equation}
\begin{aligned}
\vE &= -E_{0}\,\sin(k_{z}z)\,\gradt\Phi_{0}\;\eiwt, &\qquad \wme &= k_{z}^{2},\\[2pt]
\vB &= iE_{0}\,\frac{k_{z}}{\omega}\,\cos(k_{z}z)\,\zhat\times\gradt\Phi_{0}\;\eiwt, &\qquad p &\geq 1 .
\end{aligned}
\label{eq:temrel}
\end{equation}
\smallskip

\noindent The end walls short the conductors together, so the coaxial cavity supports no static electric field. Including the static magnetic member of Eq.~\eqref{eq:temrel} at $p=0$ completes the three families~\cite{collin,jackson}.

\subsection{Energy and Uniqueness}
\label{sec:energy}

The uniqueness argument uses the quadratic identity derived in Section~3.1 of Ref.~\cite{rentzberguide} from Eq.~\eqref{eq:eed}. Its density is $u=\eps|\vE|^{2}/2+|\vB|^{2}/(2\muz)+C^{2}/(2\muz)$ and its flux is $\vS=(\vE\times\vB+C\vE)/\muz$, as in Eq.~(31) of Ref.~\cite{minotti2021q}. The physical energy density and flux follow from the Lagrangian of Ref.~\cite{minotti2021q} and depend on the potentials. The cavity problem specifies only $\vE$, $\vB$ and $C$, so it does not provide the potential data needed to determine those physical quantities. Here, the quadratic identity serves only as a positive conserved form.
\\

\begin{proposition}[Conserved Form and Uniqueness]
\label{prop:energy}
Let $(\vE,\vB,C)$ solve the source-free Eq.~\eqref{eq:eed} in $V$ with $\nhat\times\vE=\zero$ and $C=0$ on $S$. Then $U=\int_{V}u\,dV$ is constant, and the fields are fixed uniquely by their values at $t=0$.
\end{proposition}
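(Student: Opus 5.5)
The plan is to derive a local conservation law $\partial u/\partial t+\dive\vS=0$ from the source-free Eq.~\eqref{eq:eed}, integrate it over $V$, and show that the boundary flux vanishes under the stated wall conditions. Uniqueness then follows by applying the same identity to the difference of two solutions.

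First compute $\partial u/\partial t$ with $\rt=0$ and $\Jt=\zero$. From Faraday's law, $\frac{1}{\muz}\vB\cdot\pdt{\vB}=-\frac{1}{\muz}\vB\cdot\curl\vE$. From the extended Amp\`ere law, $\eps\vE\cdot\pdt{\vE}=\frac{1}{\muz}\vE\cdot(\curl\vB-\grad C)$. From the extended Gauss law, $\pdt{C}=-\dive\vE$, so $\frac{1}{\muz}C\,\pdt{C}=-\frac{1}{\muz}C\,\dive\vE$. Summing and using $\dive(\vE\times\vB)=\vB\cdot\curl\vE-\vE\cdot\curl\vB$ and $\dive(C\vE)=\vE\cdot\grad C+C\,\dive\vE$ gives
\begin{equation*}
\pdt{u}=-\frac{1}{\muz}\dive\left(\vE\times\vB+C\vE\right)=-\dive\vS .
\end{equation*}
This is the identity quoted from Section~3.1 of Ref.~\cite{rentzberguide}; I would nevertheless verify it as above. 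Integrating over $V$ and applying the divergence theorem, which is valid for a piecewise smooth wall and fields regular up to $S$, yields $dU/dt=-\oint_{S}\vS\cdot\nhat\,dS$. On $S$, $(\vE\times\vB)\cdot\nhat=\vB\cdot(\nhat\times\vE)=0$ because $\nhat\times\vE=\zero$, and $C\,\vE\cdot\nhat=0$ because $C=0$. Both flux terms therefore vanish, and $U$ is constant. It is worth noting that the condition $\nhat\cdot\vB=0$ is not needed for this step.

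For uniqueness, let two solutions share the same data at $t=0$. By linearity their difference $(\delta\vE,\delta\vB,\delta C)$ again solves the source-free system with homogeneous wall conditions, so its $U$ is constant. Since $U$ is zero initially, it stays zero. Because $u$ is a sum of nonnegative squares, $\delta\vE=\zero$, $\delta\vB=\zero$ and $\delta C=0$ throughout $V$ for all $t$.

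The main obstacle is regularity rather than algebra. At edges and corners of a piecewise smooth wall the fields may be singular, so the divergence theorem and the boundary traces need justification. I would handle this by assuming finite-energy solutions with enough regularity, for example $\vE,\vB\in H(\mathrm{curl})$ and $C\in H^{1}$ with $C|_S=0$, so that the boundary pairings are defined in the trace sense. Alternatively, I would state the argument for classical solutions that are $C^{1}$ up to the smooth parts of $S$ and have integrable singularities at edges, obtaining the flux integral as a limit over excised neighborhoods of the edges.
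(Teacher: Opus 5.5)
Your proof is correct and follows the paper's route: you integrate $\partial u/\partial t+\dive\vS=0$, which the paper cites from the companion work and you rederive correctly, and show that the wall flux $[\vB\cdot(\nhat\times\vE)+C\,\nhat\cdot\vE]/\muz$ vanishes under $\nhat\times\vE=\zero$ and $C=0$. You then obtain uniqueness from the positive definiteness of $u$ applied to the difference of two solutions, exactly as the paper does, and your regularity remarks add care the paper omits without changing the argument.
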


\begin{proof}
The proof follows Proposition~1 of Ref.~\cite{rentzberguide}. The wall flux $\nhat\cdot\vS=[\vB\cdot(\nhat\times\vE)+C\,\nhat\cdot\vE]/\muz$ vanishes on $S$, and the cavity has no open ends. The difference of two solutions with the same initial fields is source-free and has $U=0$, so the positive definiteness of $u$ forces this difference to vanish.
\end{proof}

\subsection{Scalar Wall Condition}
\label{sec:extwall}

The classical conditions in Eq.~\eqref{eq:pec} leave some boundary data unspecified in the extended system, as the wave count in Section~3.2 of Ref.~\cite{rentzberguide} shows. For fields that vary only along $\nhat$, with $E_{n}=\nhat\cdot\vE$, the source-free Eq.~\eqref{eq:eed} gives

\begin{equation}
\pdt{}\big(E_{n}\mp cC\big)\mp c\,\frac{\partial}{\partial n}\big(E_{n}\mp cC\big)=0 ,
\label{eq:characteristic}
\end{equation}
\smallskip

\noindent so $E_{n}-cC$ travels into the cavity and requires a boundary condition~\cite{benzoni2007} that the usual conducting conditions do not supply. To make the remaining freedom explicit, choose a frequency $\omega$ for which $\wme$ is not a Dirichlet eigenvalue of $V$. Let $u$ solve $(\nabla^{2}+\wme)u=-\wme$ in $V$, with $u=0$ on $S$, and let $w$ be a constant. Then
\begin{equation}
C=w\,(1+u)\,\eiwt,
\qquad
\vE=-\frac{i\,w}{\omega\muz\eps}\,\grad u\;\eiwt,
\qquad
\vB=\zero
\label{eq:continuum}
\end{equation}
satisfies the source-free Eq.~\eqref{eq:eed} and both conditions of Eq.~\eqref{eq:pec}. With the conductor's conditions alone the cavity would ring at every frequency and select no spectrum. The third condition used in this paper is
\begin{equation}
C=0\qquad\text{on }S .
\label{eq:dirichletdata}
\end{equation}
\noindent Three facts support the choice in Eq.~\eqref{eq:dirichletdata}. First, the conductor requires it for every separated mode of the scalar sector alone in a uniform cavity, as Proposition~\ref{prop:wall} shows. Second, it is the condition used for the cleaning scalar at a perfect conductor in Refs.~\cite{munz2000,munz2000cpc,crestetto2012}. Third, it removes the quadratic wall flux $CE_{n}/\muz$ used in Proposition~\ref{prop:energy}.
\\

Equation~\eqref{eq:dirichletdata} specifies the boundary problem studied here. It is not the condition obtained at a physical conductor. Matching the solutions in the vacuum and in a conductor described by its constitutive laws, with continuity of the potentials and conservation of the surface charge, gives $\partial C/\partial n=0$ on the vacuum side with $C$ discontinuous across the surface, Eq.~(21) of Ref.~\cite{minotti2023} and Section~3.2 of Ref.~\cite{rentzberguide}. Under that condition a scalar plane wave at oblique incidence reflects together with a transverse wave, the scalar and Maxwell fields are coupled wherever $C$ varies along the wall, and the cavity modes for that case are a separate problem. Under Eq.~\eqref{eq:dirichletdata} a scalar plane wave incident at any angle $0\leq\theta<\pi/2$ reflects with $C_{r}=-C_{i}$ and no transverse wave, so the SL modes include waves that strike the wall at an angle. Equation~\eqref{eq:dirichletdata} is adopted here as the condition under which the scalar sector separates, Section~\ref{sec:decomp}, and as the condition imposed on the cleaning scalar at a perfect conductor, and the spectra below are results for that problem.
\\

At a physical conductor the wall condition follows from matching the solutions on the two sides of the surface, Section~3.2 of Ref.~\cite{rentzberguide}, where a thin conducting layer proposed by one of the authors (F.M.) as the realization of that matching is also described. The constitutive relation inside such a layer, Ohm's law or a Drude-type model, and the completion of the matched solution across the surface are left to a separate paper. None of the results below depend on them, since they belong to Eq.~\eqref{eq:dirichletdata}.

\section{Scalar-Longitudinal Sector}
\label{sec:sl}

\subsection{Reduction and Wall Condition}
\label{sec:reduction}

For a source-free field with time factor $\eiwt$, $\omega\neq0$ and no magnetic field, Faraday's law requires $\vE$ to be irrotational. The extended Amp\`ere's law gives $\grad C=i\omega\muz\eps\,\vE$ (Section~4.1 of Ref.~\cite{rentzberguide}), or equivalently,
\begin{equation}
\vE=-\frac{i}{\omega\muz\eps}\,\grad C,
\qquad
\vB=\zero .
\label{eq:slreduce}
\end{equation}
The extended Gauss's law, $\dive\vE=i\omega C$, then gives
\begin{equation}
\nabla^{2}C+\wme\,C=0.
\label{eq:helmholtzC}
\end{equation}
The electric field is a pure gradient, so the entire mode is generated by the single scalar $C$. By Eq.~\eqref{eq:slreduce}, the conductor's condition $\nhat\times\vE=\zero$ becomes $\nhat\times\grad C=\zero$ on $S$. This requires $C$ to be constant on each connected part of the wall but places no further restriction on its value. In a cavity of uniform cross-section, the geometry provides the remaining constraint.
\\

\begin{proposition}[Wall Condition on Separated SL Modes]
\label{prop:wall}
Let $V=A\times(0,L)$ and let $C=\psi(x,y)\,Z(z)\,\eiwt$ be a separated SL field with $\omega\neq0$. The conditions of Eq.~\eqref{eq:pec} hold if and only if $C=0$ on all of $S$. Then $\psi$ is a Dirichlet eigenfunction of the cross-section, $Z=\sin(p\pi z/L)$ with $p\geq1$, and no SL mode uniform along the axis exists.
\end{proposition}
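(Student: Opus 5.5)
The plan is to start from the reduction in Eq.~\eqref{eq:slreduce}, which makes $\vB=\zero$ automatic, so $\nhat\cdot\vB=0$ holds trivially. The conditions of Eq.~\eqref{eq:pec} therefore reduce to $\nhat\times\grad C=\zero$ on $S$, which says that $C$ is constant on each connected part of the wall. Substituting the separated form into Eq.~\eqref{eq:helmholtzC} and dividing by $\psi Z$ gives $\gradt^{2}\psi+\gamma^{2}\psi=0$ in $A$ and $Z''+k_{z}^{2}Z=0$, with $\gamma^{2}+k_{z}^{2}=\wme$ as in Eq.~\eqref{eq:gamma}. The ``if'' direction is then immediate: if $C=0$ on $S$, its tangential gradient vanishes, so $\nhat\times\vE=\zero$.

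For the ``only if'' direction, I would split the wall into the side wall $\partial A\times(0,L)$ and the end faces $z=0$ and $z=L$.

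\emph{Side wall.} On a connected component $\Gamma$ of $\partial A$, the tangential derivative along $z$ must vanish, so $\psi|_{\Gamma}\,Z'(z)=0$ for all $z$. The tangential derivative along $\Gamma$ gives $\partial_{s}\psi\,Z=0$.
\begin{itemize}
\item If $Z'\not\equiv0$, then $\psi=0$ on $\Gamma$.
\item If $Z'\equiv0$, then $Z$ is a nonzero constant, the field is uniform along the axis, and $k_{z}=0$. In that case $\psi$ is constant on each component of $\partial A$, and I still need to exclude this case (see below).
\end{itemize}

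\emph{End faces.} On $z=0$ the tangential gradient is $Z(0)\gradt\psi$. Since $\psi$ is a nonconstant eigenfunction whenever $\gamma\neq0$, this forces $Z(0)=0$, and likewise $Z(L)=0$. Then $Z=\sin(k_{z}z)$ with $k_{z}=p\pi/L$ and $p\geq1$, which also rules out $Z'\equiv0$. With $Z'\not\equiv0$, the side-wall step gives $\psi=0$ on all of $\partial A$, so $\psi$ is a Dirichlet eigenfunction and $C=0$ on every face.

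The remaining subcase is $\gamma=0$. Then $\psi$ is harmonic, and $k_{z}^{2}=\wme>0$, so $Z$ is nonconstant. The side-wall step then forces $\psi=0$ on $\partial A$, and harmonicity gives $\psi\equiv0$, so no nonzero field exists.

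I expect the main obstacle to be the uniform case $Z=$ const, together with the coupling between the end-face and side-wall constraints. Uniformity means $k_{z}=0$, so $\gamma^{2}=\wme>0$ and $\psi$ is not constant. The end-face condition $\nhat\times\grad C=\zero$ at $z=0$ then requires $\gradt\psi=\zero$ throughout $A$, which contradicts this. So no axially uniform SL mode exists.

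The step to handle carefully is the use of \emph{both} tangential directions on the end faces, since the end faces are where uniform modes fail. I also have to confirm that nonconstancy of $\psi$ follows from $\gamma^{2}\neq0$; it does, because a constant satisfying $\gradt^{2}\psi+\gamma^{2}\psi=0$ must vanish.

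Finally, since $C$ is continuous on the closed wall, its constant values on the side wall and on the end faces all agree and equal $0$ at the edges. This gives $C=0$ on all of $S$.
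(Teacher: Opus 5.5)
Your proposal is correct and follows the paper's proof. You read off the tangential components $\psi\,Z'$ and $Z\,\partial\psi/\partial s$ on the side wall and $Z(0)\gradt\psi$, $Z(L)\gradt\psi$ on the end faces, use the end faces to force $Z(0)=Z(L)=0$ and hence $Z=\sin(p\pi z/L)$ with $p\geq1$, and then use the side wall to get $\psi=0$ on the boundary of $A$. The only difference is how you exclude a constant $\psi$: the paper notes that the side wall would then force $Z'=0$, so $C$ is constant and Eq.~\eqref{eq:helmholtzC} gives $\omega=0$, whereas you observe that a constant cannot solve the transverse equation with $\gamma\neq0$ and treat $\gamma=0$ separately by harmonicity, which is equivalent but slightly longer.
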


\begin{proof}
On the side wall, the parallel components of $\grad C$ are $\psi\,dZ/dz$ and $Z\,\partial\psi/\partial s$, where $s$ is the arc length along the boundary of $A$. On the end walls, they are $Z(0)\gradt\psi$ and $Z(L)\gradt\psi$. If $Z(0)$ or $Z(L)$ were nonzero, $\psi$ would be constant. The side wall would then require $dZ/dz=0$, and Eq.~\eqref{eq:helmholtzC} would give $\omega=0$. Hence $Z=\sin(p\pi z/L)$ with $p\geq1$, and the side wall requires $\psi=0$ on the boundary of $A$. The converse follows immediately.
\end{proof}
This proposition is the standing-wave counterpart of Proposition~2 of Ref.~\cite{rentzberguide}, in which $E_{z}=(k_{z}/\omega\muz\eps)\,C$ on the side wall forces $C=0$ for $k_{z}\neq0$. It also corresponds to the cavity limit discussed in Section~4.3 of that reference. The end walls exclude the case $k_{z}=0$, so Eq.~\eqref{eq:dirichletdata} is required for the separated modes of the scalar sector alone in the cavities of Section~\ref{sec:canonical}. For a cavity of general shape, the conductor fixes only the derivative of $C$ along the wall, leaving Eq.~\eqref{eq:dirichletdata} as an independent condition.

\subsection{Wall Charge}
\label{sec:wallcharge}

The perpendicular electric field of an SL mode ends on wall charge, as discussed in Section~4.4 of Ref.~\cite{rentzberguide}. Equations~\eqref{eq:wallsources} and~\eqref{eq:slreduce} give
\begin{equation}
\sigfs=\frac{i}{\omega\muz}\,\frac{\partial C}{\partial n},
\qquad
\Kfs=\zero
\qquad\text{for every SL mode.}
\label{eq:slwallsources}
\end{equation}
Applying the divergence theorem together with Eq.~\eqref{eq:helmholtzC} gives the total wall charge
\begin{equation}
Q_{S}=\oint_{S}\sigfs\,dA=\frac{i}{\omega\muz}\int_{V}\nabla^{2}C\,dV=\eps\,\frac{d}{dt}\int_{V}C\,dV .
\label{eq:QS}
\end{equation}
A Maxwell mode has $\dive\vE=0$ and no net wall charge. An SL mode carries net wall charge whenever its profile has a nonzero mean, as in rectangular-cavity modes for which $n$, $m$ and $p$ are all odd. At each instant, its electric field is the one that would be produced inside a grounded conducting shell by the effective charge $\rC=-\eps\,\partial C/\partial t$ discussed in Section~11 of Ref.~\cite{rentzberguide}. Equation~\eqref{eq:QS} shows that this effective charge and the wall charge balance. Since $\Kfs=\zero$, no parallel current feeds the wall charge. Section~4.4 of Ref.~\cite{rentzberguide} discusses what would have to supply it, and shows that a conductor that conserves its surface charge and carries no current into its surface gives $\partial C/\partial n=0$ instead.

\subsection{Maxwell-SL Decomposition}
\label{sec:decomp}

\begin{proposition}[Maxwell and SL Parts of a Field]
\label{prop:decomp}
Let $V$ be any bounded cavity, and let $(\vE,\vB,C)$ be a source-free solution of Eq.~\eqref{eq:eed} with $\nhat\times\vE=\zero$ and $C=0$ on $S$. For each $t$ let $\Phi_{\mathrm{SL}}$ solve $\nabla^{2}\Phi_{\mathrm{SL}}=-\dive\vE$ in $V$ with $\Phi_{\mathrm{SL}}=0$ on $S$. Then
\begin{equation}
(\vE,\vB,C)=(\vE_{\mathrm{M}},\vB,0)+(\vE_{\mathrm{SL}},\zero,C),
\qquad
\vE_{\mathrm{SL}}=-\grad\Phi_{\mathrm{SL}},
\qquad
\grad C=-\muz\eps\,\pdt{\vE_{\mathrm{SL}}},
\label{eq:decomp}
\end{equation}
where the first part solves Maxwell's equations and the second solves Eq.~\eqref{eq:eed}, each with the perfect-conductor conditions. The two parts are orthogonal in the quadratic form, $\int_{V}\vE_{\mathrm{M}}\cdot\vE_{\mathrm{SL}}\,dV=0$, and each conserves its own contribution to $U$. At a fixed frequency $\omega\neq0$ the SL part is $\vE_{\mathrm{SL}}=-(i/\omega\muz\eps)\grad C$.
\end{proposition}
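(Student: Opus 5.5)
The idea is to construct the SL part directly from $\dive\vE$ and then check, one equation at a time, that the remainder satisfies Maxwell's equations. Throughout, $\Phi_{\mathrm{SL}}$ is the Dirichlet solution defined in the statement.

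\emph{Step 1: the SL part and its wall conditions.} Set $\vE_{\mathrm{SL}}=-\grad\Phi_{\mathrm{SL}}$ and $\vE_{\mathrm{M}}=\vE-\vE_{\mathrm{SL}}$. By construction $\dive\vE_{\mathrm{SL}}=\dive\vE$, hence $\dive\vE_{\mathrm{M}}=0$. Since $\Phi_{\mathrm{SL}}=0$ on $S$, its tangential gradient vanishes there, so $\nhat\times\vE_{\mathrm{SL}}=\zero$. Subtracting from $\nhat\times\vE=\zero$ gives $\nhat\times\vE_{\mathrm{M}}=\zero$ as well.

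\emph{Step 2: the relation between $C$ and $\vE_{\mathrm{SL}}$, which is the key step.} The source-free extended Gauss law gives $\dive\vE=-\partial C/\partial t$. Therefore $\nabla^{2}\Phi_{\mathrm{SL}}=\partial C/\partial t$ in $V$. Both $\Phi_{\mathrm{SL}}$ and $\partial C/\partial t$ vanish on $S$, the latter because $C=0$ on $S$ for all $t$. To obtain $\grad C=-\muz\eps\,\partial\vE_{\mathrm{SL}}/\partial t=\muz\eps\,\grad\partial_t\Phi_{\mathrm{SL}}$, it suffices to show
\begin{equation*}
\muz\eps\,\pdt{\Phi_{\mathrm{SL}}}=C .
\end{equation*}
Both sides vanish on $S$, so by uniqueness of the Dirichlet problem it is enough to show their Laplacians agree. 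The scalar wave equation \eqref{eq:Cwave} with $\Lambda=0$ gives $\nabla^{2}C=\muz\eps\,\partial^{2}C/\partial t^{2}$. Meanwhile $\muz\eps\,\nabla^{2}\partial_t\Phi_{\mathrm{SL}}=\muz\eps\,\partial^{2}C/\partial t^{2}$, so the two Laplacians agree. I expect this step to be the main obstacle, because it needs enough regularity in $t$ to commute $\partial_t$ with the Dirichlet solve. I would assume classical smooth solutions, or argue at each fixed frequency.

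\emph{Step 3: the two parts satisfy their equations.}
\begin{itemize}
\item \textbf{SL part.} With $\vB=\zero$: Faraday's law holds because $\vE_{\mathrm{SL}}$ is a gradient, so its curl is zero; $\dive\vB=0$ is trivial. Amp\`ere's law reduces to $0=\muz\eps\,\partial_t\vE_{\mathrm{SL}}+\grad C$, which is Step 2. Gauss's law $\dive\vE_{\mathrm{SL}}=-\partial_t C$ holds by Step 1.
\item \textbf{Maxwell part.} Subtracting gives $\curl\vE_{\mathrm{M}}=\curl\vE=-\partial_t\vB$. Amp\`ere's law becomes $\curl\vB=\muz\eps\,\partial_t\vE+\grad C=\muz\eps\,\partial_t\vE_{\mathrm{M}}$, with $\dive\vE_{\mathrm{M}}=0$. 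The condition $\nhat\cdot\vB=0$ comes from the original field; if it is not assumed, it follows from $\nhat\cdot\curl\vE=0$ on $S$ for zero initial normal component.
\end{itemize}

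\emph{Step 4: orthogonality, energy split, and the fixed-frequency formula.} Integrating by parts,
\begin{equation*}
\int_{V}\vE_{\mathrm{M}}\cdot\vE_{\mathrm{SL}}\,dV=-\oint_{S}\Phi_{\mathrm{SL}}\,\vE_{\mathrm{M}}\cdot\nhat\,dA+\int_{V}\Phi_{\mathrm{SL}}\,\dive\vE_{\mathrm{M}}\,dV=0,
\end{equation*}
since $\Phi_{\mathrm{SL}}=0$ on $S$ and $\dive\vE_{\mathrm{M}}=0$. Therefore $U=U_{\mathrm{M}}+U_{\mathrm{SL}}$. Each part is a source-free solution satisfying $\nhat\times\vE=\zero$ and $C=0$ on $S$ (the Maxwell part trivially, with $C=0$). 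Applying Proposition~\ref{prop:energy} to each part shows that each contribution is separately conserved.

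Finally, at fixed $\omega\neq0$ the time factor $\eiwt$ turns Step 2 into $\grad C=i\omega\muz\eps\,\vE_{\mathrm{SL}}$. This gives $\vE_{\mathrm{SL}}=-(i/\omega\muz\eps)\grad C$, consistent with Eq.~\eqref{eq:slreduce}.
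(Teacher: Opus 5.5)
Your proposal is correct and follows the paper's proof step for step. Your identity $\muz\eps\,\partial\Phi_{\mathrm{SL}}/\partial t=C$, obtained from the extended Gauss law, the homogeneous scalar wave equation and uniqueness of the Dirichlet problem, is exactly the paper's argument that $g=C-\muz\eps\,\partial\Phi_{\mathrm{SL}}/\partial t$ is harmonic and vanishes on $S$. The wall conditions, the orthogonality by parts and the fixed-frequency formula are handled the same way, and applying Proposition~\ref{prop:energy} to each part justifies explicitly the separate conservation that the paper only asserts.
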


\begin{proof}
Define $g=C-\muz\eps\,\partial\Phi_{\mathrm{SL}}/\partial t$. The extended Gauss's law and Eq.~\eqref{eq:Cwave} with $\Lambda=0$ imply $\nabla^{2}g=0$. Since $g=0$ on $S$, it follows that $g\equiv0$ and $\grad C=-\muz\eps\,\partial\vE_{\mathrm{SL}}/\partial t$. The triple $(\vE_{\mathrm{SL}},\zero,C)$ then satisfies Eq.~\eqref{eq:eed}, while the remaining field has $\dive\vE_{\mathrm{M}}=0$ and no scalar. Both parts satisfy the wall conditions because $\Phi_{\mathrm{SL}}=0$ on $S$. Their orthogonality follows from integrating $\vE_{\mathrm{M}}\cdot\grad\Phi_{\mathrm{SL}}$ by parts. At a fixed frequency, $g=0$ becomes $\Phi_{\mathrm{SL}}=iC/(\omega\muz\eps)$, recovering Proposition~3 of Ref.~\cite{rentzberguide}.
\end{proof}
The decomposition holds for any time dependence, any cavity and any number of conductors in the wall, provided $C=0$ at the wall. Under $\partial C/\partial n=0$ the SL field of Eq.~\eqref{eq:slreduce} still satisfies the field equations, but it meets the conducting wall condition on its own only where $C$ is constant on each connected part of the wall, and the two parts are coupled wherever $C$ varies along the wall, Section~\ref{sec:extwall}.
\\

The decomposition is unique for oscillating fields. Any two decompositions of the same field would differ by a field with neither curl nor divergence that satisfies $\nhat\times\vE=\zero$ on the wall. Amp\`ere's law requires such a field to be time independent. This difference is therefore a static electric field between separate conductors of the wall, with $C=0$, and belongs to the Maxwell part.

\section{Cavity Spectral Theorems}
\label{sec:theorem}

\subsection{Dirichlet Spectrum}
\label{sec:dirichlet}

Let $-\nabla^{2}$ act on functions that vanish on $S$. On a bounded cavity with piecewise smooth wall the eigenvalue problem
\begin{equation}
\nabla^{2}\phi_{j}+\lambda_{j}\phi_{j}=0\quad\text{in }V,
\qquad
\phi_{j}=0\quad\text{on }S,
\qquad
\int_{V}\phi_{i}\phi_{j}\,dV=\delta_{ij},
\label{eq:dirichlet}
\end{equation}
has a discrete sequence of eigenvalues $0<\lambda_{1}<\lambda_{2}\leq\lambda_{3}\leq\cdots$ of finite multiplicity, with real eigenfunctions that are orthonormal and complete in $V$~\cite{mclean,couranthilbert}. Green's first identity gives $\int_{V}|\grad\phi_{j}|^{2}dV=\lambda_{j}$. The following theorem is the volume form of Theorem~1 of Ref.~\cite{rentzberguide}, with the cavity replacing the cross-section and resonance replacing cutoff.
\\

\begin{theorem}[Cavity Spectral Theorem]
\label{thm:spectrum}
Let $V$ be a bounded cavity with perfectly conducting wall $S$ and the condition $C=0$ on $S$. A source-free SL field with time factor $\eiwt$ exists if and only if $\wme=\lambda_{j}$ for some $j$, and it is then
\begin{equation}
C=A\,\phi_{j}\,e^{-i\omega_{j}t},
\qquad
\vE=-\frac{iA}{\omega_{j}\muz\eps}\,\grad\phi_{j}\,e^{-i\omega_{j}t},
\qquad
\vB=\zero,
\qquad
\omega_{j}=c\sqrt{\lambda_{j}} ,
\label{eq:slmode}
\end{equation}
with $\phi_{j}$ any eigenfunction that belongs to $\lambda_{j}$. At every other frequency $C\equiv0$ and every source-free field is a Maxwell field. The resonant frequencies of the extended cavity are those of the Maxwell cavity together with the $\omega_{j}$, and multiplicities add where the two sets meet.
\end{theorem}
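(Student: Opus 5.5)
The plan is to reduce the time-harmonic problem to a scalar Dirichlet problem using the reduction of Section~\ref{sec:reduction} and the decomposition of Proposition~\ref{prop:decomp}. The spectral claims then follow from the properties of the Dirichlet eigenproblem in Eq.~\eqref{eq:dirichlet}.

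\emph{Sufficiency.} Take a time-harmonic SL field with $\omega\neq0$. By Eq.~\eqref{eq:slreduce}, $\vB=\zero$ and $\vE=-(i/\omega\muz\eps)\grad C$. By Eq.~\eqref{eq:helmholtzC}, $C$ satisfies $\nabla^{2}C+\wme\,C=0$ in $V$, and the imposed wall condition gives $C=0$ on $S$. If $C\not\equiv0$, then $C$ is a Dirichlet eigenfunction with eigenvalue $\wme$, so $\wme=\lambda_{j}$ for some $j$. Since the eigenspace is finite dimensional and spanned by real $\phi_{j}$, we may write $C=A\phi_{j}$, taking $\phi_{j}$ as any element of that eigenspace. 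This gives Eq.~\eqref{eq:slmode}.

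\emph{Necessity.} For the converse, substitute Eq.~\eqref{eq:slmode} into Eq.~\eqref{eq:eed} with no sources. Faraday's law holds because $\vE$ is a gradient. Amp\`ere's law reduces to $\grad C=i\omega\muz\eps\,\vE$, which is an identity for this $\vE$. Gauss's law reduces to $-\nabla^{2}C/(\omega\muz\eps)\cdot i=i\omega C$, which is exactly the eigen-equation once $\omega_{j}=c\sqrt{\lambda_{j}}$. On $S$, $\phi_{j}=0$ makes $\grad\phi_{j}$ normal, so $\nhat\times\vE=\zero$. Also $\nhat\cdot\vB=0$ trivially. Since $\lambda_{j}>0$, we have $\omega_{j}\neq0$, so the reduction is legitimate.

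\emph{Remaining claims.} Consider a general source-free time-harmonic field $(\vE,\vB,C)$ at frequency $\omega\neq0$. Apply Proposition~\ref{prop:decomp} to split it into $(\vE_{\mathrm{M}},\vB,0)+(\vE_{\mathrm{SL}},\zero,C)$. The relation $\Phi_{\mathrm{SL}}=iC/(\omega\muz\eps)$ shows that each part inherits the factor $\eiwt$. The SL part is then a source-free SL field at $\omega$, so by the first part $C\equiv0$ unless $\wme=\lambda_{j}$. When $C\equiv0$, $\Phi_{\mathrm{SL}}\equiv0$, so $\vE=\vE_{\mathrm{M}}$ and the field is Maxwell.

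It follows that the mode space at frequency $\omega$ is the direct sum of the Maxwell eigenspace of Eq.~\eqref{eq:vhelm} and the Dirichlet eigenspace for $\lambda=\wme$. The sum is direct by the orthogonality in Proposition~\ref{prop:decomp}: an element common to both would have $\vB=\zero$, $C=0$ and $\vE$ a gradient with zero divergence and zero tangential part. That is a static field, which is excluded at $\omega\neq0$ (see the uniqueness remark after Proposition~\ref{prop:decomp}). Hence the resonances form the union of the two sets, and the multiplicities add.

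The main obstacle is the last step. The decomposition must be shown to respect the time-harmonic structure, so that each part is separately a mode at the same $\omega$. The sum must also be shown to be direct, which rests on excluding static harmonic fields at nonzero frequency. Everything else is direct substitution.
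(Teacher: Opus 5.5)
Your proposal is correct and follows the paper's own proof. The paper argues that $C$ obeys the Helmholtz equation with $C=0$ on $S$, so it is zero or a Dirichlet eigenfunction; it then uses Proposition~\ref{prop:decomp} to split a general field into Maxwell and SL parts, and the uniqueness of that split makes the multiplicities add. You add the explicit substitution check of the converse and a directness argument, which is in fact immediate: an SL mode with $C=0$ has $\vE=-(i/\omega\muz\eps)\grad C=\zero$. Your labels ``sufficiency'' and ``necessity'' are interchanged, but this is only cosmetic.
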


\begin{proof}
For a source-free field with time factor $\eiwt$, Eq.~\eqref{eq:Cwave} and Eq.~\eqref{eq:dirichletdata} require $C$ to be either zero or a Dirichlet eigenfunction with $\wme=\lambda_{j}$. Proposition~\ref{prop:decomp} decomposes the field into a Maxwell part and the SL part of Eq.~\eqref{eq:slmode}, as in Theorem~1 of Ref.~\cite{rentzberguide}. Because this decomposition is unique, the multiplicities add.
\end{proof}

The initial-value problem is solved by expanding the scalar part of any source-free field in the eigenfunctions $\phi_{j}$. Each coefficient oscillates at $\omega_{j}$, with initial data set by $C$ and by $\partial C/\partial t=-\dive\vE$ at $t=0$, as in Corollary~1 of Ref.~\cite{rentzberguide}. The remaining field evolves as a Maxwell field.

\subsection{One Eigenproblem and the Fundamental}
\label{sec:oneproblem}

The extended cavity can also be posed as a single eigenproblem for the electric field. For the time factor $\eiwt$ the source-free Eq.~\eqref{eq:eed} gives $\curl\curl\vE=i\omega\grad C+\wme\,\vE$ and $\grad(\dive\vE)=i\omega\grad C$, and subtracting the two,
\begin{equation}
\nabla^{2}\vE+\wme\,\vE=\zero\quad\text{in }V,
\qquad
\nhat\times\vE=\zero,\quad\dive\vE=0\quad\text{on }S .
\label{eq:oneproblem}
\end{equation}
Compared with the Maxwell problem of Eq.~\eqref{eq:vhelm}, the divergence condition has moved from the volume to the wall, where it is the condition of Eq.~\eqref{eq:dirichletdata} written in terms of the field, since $C=-(i/\omega)\dive\vE$. In Theorem~1.1 of Ref.~\cite{costabeldauge}, Costabel and Dauge prove that the spectrum of their regularized Maxwell formulation splits into Maxwell and Dirichlet eigenvalues on simply connected Lipschitz domains. In convex polyhedra, this formulation coincides with their Lam\'e problem restricted to fields in $H^{1}$. In nonconvex polyhedra, that restriction excludes eigenfields without $H^{1}$ regularity, so the Lam\'e spectrum can lose part or all of the splitting. The field formulation used here does not impose that additional restriction. Theorem~\ref{thm:spectrum} identifies the Dirichlet gradient fields as the SL modes of the extended theory.
\\

\begin{proposition}[The Fundamental Is Electromagnetic]
\label{prop:rohleder}
Let the wall $S$ be connected. Then at least three electromagnetic modes, counted with multiplicity, lie strictly below the first SL mode, and at least $2k+1$ lie at or below the $k$-th SL mode.
\end{proposition}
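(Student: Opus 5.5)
The plan is to invoke the Rohleder-type inequality between Maxwell and Dirichlet eigenvalues: for a bounded Lipschitz domain $V$ with connected boundary, the $k$-th Maxwell eigenvalue $\mu_{k}$ of Eq.~\eqref{eq:vhelm} satisfies $\mu_{k+2}<\lambda_{k}$, or at least $\mu_{k+2}\leq\lambda_{k}$. By Theorem~\ref{thm:spectrum} the SL frequencies are exactly $\omega_{k}=c\sqrt{\lambda_{k}}$, and the electromagnetic frequencies are $c\sqrt{\mu_{k}}$. So the proposition reduces to counting Maxwell eigenvalues below Dirichlet eigenvalues. I will prove the count by a min-max argument rather than only citing it, so that the role of the connected wall is explicit.

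First I set up the variational characterization. On $H_{0}(\mathrm{curl})\cap H(\mathrm{div}^{0})$, the Maxwell eigenvalues are the critical values of $R[\vE]=\int_{V}|\curl\vE|^{2}/\int_{V}|\vE|^{2}$. This uses the fact that for a connected wall there are no Dirichlet harmonic fields, and hence no static electric eigenfield at $\mu=0$. Equivalently, I may drop the divergence constraint and use the curl-curl form. Then all gradients $\grad\chi$ with $\chi\in H^{1}_{0}$ form the kernel, and the nonzero spectrum is unchanged. The counting function of the Maxwell problem then equals the number of eigenvalues of the curl-curl form on $H_{0}(\mathrm{curl})$ modulo that kernel.

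Second, for each $k$ I build a test space of dimension at least $2k+1$ in $H_{0}(\mathrm{curl})$, modulo gradients of $H^{1}_{0}$ functions, on which $R\leq\lambda_{k}$. The natural candidates are the fields $\phi\,\mathbf{a}$ and $\grad\phi\times\mathbf{a}$. Here $\phi$ ranges over $\mathrm{span}\{\phi_{1},\dots,\phi_{k}\}$ and $\mathbf{a}$ over constant vectors. These vanish tangentially on $S$ because $\phi=0$ there. Using $\curl(\phi\mathbf{a})=\grad\phi\times\mathbf{a}$ and averaging over $\mathbf{a}$ in an orthonormal frame gives $\sum_{i}\int|\curl(\phi\,\mathbf{e}_{i})|^{2}=2\int|\grad\phi|^{2}$ against $\sum_{i}\int|\phi\,\mathbf{e}_{i}|^{2}=3\int\phi^{2}$. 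This already yields a ratio $\tfrac{2}{3}\lambda$ on average. The refined choice, in the spirit of Rohleder, is to combine $\phi\mathbf{a}$ with gradient corrections $\grad h$, where $h$ is harmonic with boundary data tied to $\phi$. The aim is to project out the $H^{1}_{0}$-gradient component while keeping the ratio at most $\lambda_{k}$. The dimension count $3k$ minus the $k$-dimensional gradient directions, plus one extra dimension from the connected-boundary/harmonic-field argument, should give $2k+1$.

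The main obstacle is the strict inequality and the count of exactly three for $k=1$, which is where connectedness of $S$ must enter. I would first try the Rohleder-style approach, proving that a Maxwell eigenfield with eigenvalue $\lambda_{k}$ inside the test space would force some $\phi_{j}$ to have vanishing normal derivative on all of $S$. Unique continuation, via Holmgren applied to $\phi=\partial_{n}\phi=0$ on $S$, then gives a contradiction. Connectedness is used so that the only curl-free, divergence-free field with $\nhat\times\vE=\zero$ is zero. This excludes the degenerate static directions that would otherwise eat into the count. If the self-contained argument becomes too technical, I will cite the published inequality $\mu_{k+2}<\lambda_{k}$ and translate it through Theorem~\ref{thm:spectrum}. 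Strictness then gives three modes strictly below $\omega_{1}$. The non-strict general form gives at least $2k+1$ modes at or below $\omega_{k}$, with multiplicities counted as in Theorem~\ref{thm:spectrum}.
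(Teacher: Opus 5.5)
The main gap is in your final translation. The inequality you plan to fall back on, $\mu_{k+2}\leq\lambda_{k}$ (or $\mu_{k+2}<\lambda_{k}$), gives through Theorem~\ref{thm:spectrum} only $k+2$ electromagnetic modes at or below $\omega_{k}$. That is fewer than $2k+1$ for every $k\geq2$, and your last sentence turns $k+2$ into $2k+1$ without justification. The paper's proof is the citation you want to avoid, but in the form the statement needs: Rohleder's $\alpha_{3}<\lambda_{1}$ and $\alpha_{2k+1}\leq\lambda_{k}$. Here the $\alpha_{i}$ are the values of $\wme$ for divergence-free solutions of Eq.~\eqref{eq:vhelm}, with static electric fields included at $\alpha=0$. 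Connectedness of $S$ enters only afterwards, to exclude $\alpha_{i}=0$, so that every counted value is an oscillating Maxwell mode. You state this role correctly in your setup. You then also make connectedness the source of the extra dimension and of strictness, which it is not: the paper notes the inequalities still hold when the wall is disconnected.

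Your self-contained min-max sketch also fails as written. The fields $\grad\phi\times\mathbf{a}$ violate the wall condition. On $S$ one has $\grad\phi=(\partial\phi/\partial n)\,\nhat$, so $\grad\phi\times\mathbf{a}=(\partial\phi/\partial n)\,\nhat\times\mathbf{a}$ is a nonzero tangential field. The ratio $\tfrac{2}{3}\lambda$ for $\phi\mathbf{a}$ uses $\int_{V}|\vE|^{2}$ as denominator. Modulo the gradient kernel, however, the correct denominator is the squared distance of $\phi\mathbf{a}$ from $\grad H^{1}_{0}$, which is smaller, so no bound survives. The ``$3k-k+1$'' count is asserted, not derived.

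A route that does close uses the regularized form $\|\curl\vE\|^{2}+\|\dive\vE\|^{2}$ on $H_{0}(\mathrm{curl})\cap H(\mathrm{div})$. Its spectrum splits into the $\alpha_{i}$ (static fields included) and the $\lambda_{j}$, the latter carried by $\grad\phi_{j}$, as in Section~\ref{sec:oneproblem}. On $H^{1}_{0}(V)^{3}$ this form equals $\|\grad\vE\|^{2}$, so the $3k$ fields $\phi_{j}\mathbf{e}_{i}$ with $j\leq k$ have quotient at most $\lambda_{k}$. The extra direction you are missing is $\grad\phi_{k}$:
\begin{itemize}
\item it is an eigenfield of this form at $\lambda_{k}$, so its cross terms with the other test fields in $a-\lambda_{k}b$ vanish;
\item it is not in $H^{1}_{0}$, because $\partial\phi_{k}/\partial n\not\equiv0$ by unique continuation.
\end{itemize}
This gives $3k+1$ eigenvalues of the merged spectrum at or below $\lambda_{k}$. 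Removing the $k$ Dirichlet ones, with $k$ taken at the top of a degenerate cluster, leaves $\alpha_{2k+1}\leq\lambda_{k}$.

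For $k=1$, strictness follows because no $\phi_{1}\mathbf{c}$ is an eigenfield. The natural boundary condition $\dive\vE=0$ on $S$ would force $(\mathbf{c}\cdot\nhat)\,\partial\phi_{1}/\partial n=0$ on $S$, which is impossible for a bounded cavity. The merged spectrum therefore has at least three eigenvalues strictly below $\lambda_{1}$, none of them Dirichlet.
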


\begin{proof}
Let $\alpha_{1}\leq\alpha_{2}\leq\cdots$ be the values of $\wme$ for the divergence-free solutions of Eq.~\eqref{eq:vhelm}, counted with multiplicity and with any static electric fields included at $\alpha=0$. Rohleder proved $\alpha_{3}<\lambda_{1}$ and $\alpha_{2k+1}\leq\lambda_{k}$ for every bounded connected cavity with a piecewise smooth wall free of cusps~\cite{rohleder2025}. A zero value occurs only when $S$ is disconnected. With $S$ connected every $\alpha_{i}$ is positive and belongs to a Maxwell mode.
\end{proof}

The count cannot be improved. In a cube of side $a$ exactly three modes, those with one vanishing index at $\omega a/c=\pi\sqrt{2}$, lie below the first SL mode at $\pi\sqrt{3}$. When the wall is disconnected the inequalities still hold, but some of the $\alpha_{i}$ then belong to static electric fields.

\subsection{Cavities of Uniform Cross-Section}
\label{sec:uniform}

\begin{theorem}[Coincidence Theorem]
\label{thm:rightcyl}
Let $V=A\times(0,L)$. Let $\gamma_{j}^{2}$ be the eigenvalues of $-\gradt^{2}$ on $A$ with $\psi=0$ on the boundary, let $\gamma_{j}'^{\,2}$ be its positive eigenvalues with $\partial\psi/\partial n=0$ there, and let $k_{z}=p\pi/L$. The four families of the cavity are
\begin{equation}
\begin{aligned}
\text{TM}\qquad & \wme=\gamma_{j}^{2}+k_{z}^{2}, & p&\geq0 ,\\
\text{TE}\qquad & \wme=\gamma_{j}'^{\,2}+k_{z}^{2}, & p&\geq1 ,\\
\text{TEM}\qquad & \wme=k_{z}^{2}, & p&\geq1 ,\\
\text{SL}\qquad & \wme=\gamma_{j}^{2}+k_{z}^{2}, & p&\geq1 ,
\end{aligned}
\label{eq:rightcyl}
\end{equation}
with one TEM family for each conductor of the cross-section beyond the first. The SL spectrum is the TM spectrum with $p\geq1$, mode for mode. The TM modes uniform along the axis, the TE modes and the TEM modes lie outside this pairing.
\end{theorem}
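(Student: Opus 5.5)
The plan is to treat the classical and scalar sectors separately and then compare the two lists. By Theorem~\ref{thm:spectrum} and Proposition~\ref{prop:decomp}, every source-free field with time factor $\eiwt$ splits uniquely into a Maxwell part and an SL part, and the multiplicities add. The spectrum is therefore the union of the Maxwell spectrum of Eq.~\eqref{eq:vhelm} on $A\times(0,L)$ and the Dirichlet spectrum of Eq.~\eqref{eq:dirichlet} on the same volume. For the Maxwell part I would take the standard cylindrical-cavity result as given. This is the completeness of the TM, TE and TEM families in Eqs.~\eqref{eq:tmrel}--\eqref{eq:temrel}: it follows by expanding $E_{z}$ and $B_{z}$ in transverse Dirichlet and Neumann eigenfunctions and in $\cos(k_{z}z)$ and $\sin(k_{z}z)$, and by treating the part with $E_{z}=B_{z}=0$ through the harmonic potentials $\Phi_{0}$, of which there are $N-1$ independent ones for $N$ conductors. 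Equation~\eqref{eq:gamma} then gives the first three lines of Eq.~\eqref{eq:rightcyl}, with the allowed ranges of $p$ read off from the end-wall behaviour noted after each family.

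For the SL line I would compute the Dirichlet spectrum of $-\nabla^{2}$ on $V=A\times(0,L)$ by separation of variables. The products $\psi_{j}(x,y)\sin(p\pi z/L)$ with $p\geq1$, where $\psi_{j}$ are the Dirichlet eigenfunctions of $A$, are eigenfunctions with eigenvalue $\gamma_{j}^{2}+(p\pi/L)^{2}$. Each product is orthogonal to the others, and the family is complete in $L^{2}(V)$, because it is a tensor product of two complete orthonormal systems: the $\psi_{j}$ are complete on $A$ and the $\sin(p\pi z/L)$ are complete on $(0,L)$. Completeness shows that no other Dirichlet eigenvalue exists. Any extra eigenfunction would be orthogonal to the whole basis and hence zero. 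This also reproduces Proposition~\ref{prop:wall}, which excludes $p=0$. Setting $\wme=\lambda$ in Theorem~\ref{thm:spectrum} gives the fourth line, with multiplicities equal to those of the separated products.

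The pairing then follows by comparing lines one and four. For each index pair $(j,p)$ with $p\geq1$, the TM mode and the SL mode have the same value $\gamma_{j}^{2}+k_{z}^{2}$. Conversely, the TM modes with $p=0$ have no SL partner, since $\sin(0)=0$. The TE and TEM frequencies come from Neumann eigenvalues and from $\gamma=0$, and they are not assigned partners. Here one should note that "outside the pairing" means these modes are not matched index for index. It does not claim that their frequencies never coincide numerically with SL frequencies; such accidental coincidences are possible, and there the multiplicities simply add.

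The main obstacle is the completeness of the classical families in the first step, especially the TEM count and the $p=0$ static magnetic members on a multiply connected cross-section. I would handle it by citing the standard treatment~\cite{collin,jackson}, as the paper does for Eqs.~\eqref{eq:tmrel}--\eqref{eq:temrel}. The remaining steps are routine separation of variables.
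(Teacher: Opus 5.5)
Your proposal is correct and follows the paper's proof. The first three lines are taken as the standing-wave forms of the classical guide families, and the SL line comes from the completeness of the products $\psi_{j}(x,y)\sin(p\pi z/L)$ with $p\geq1$, which span every Dirichlet eigenspace of $A\times(0,L)$, degenerate levels included. The only difference is that the paper also cites Proposition~\ref{prop:wall} to show that the conductor conditions alone admit no other separated scalar modes, whereas you assume $C=0$ from the start. That assumption is enough for the theorem, but your claim that completeness ``reproduces'' that proposition goes too far, since the proposition does not assume $C=0$.
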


\begin{proof}
The first three lines are the standing-wave forms of the guide families described in Section~\ref{sec:classicalmodes}. For the fourth family, the Dirichlet eigenfunctions on the product domain have the form $\psi_{j}(x,y)\sin(p\pi z/L)$ with $p\geq1$. These functions are complete in $V$ and span every eigenspace, including the nonseparable superpositions allowed by a degenerate level. Proposition~\ref{prop:wall} shows that the conductor admits no other separated modes of the scalar sector alone.
\end{proof}
Section~4.3 of Ref.~\cite{rentzberguide} obtains the resonator modes of each transverse SL guide family at the allowed axial wavenumbers $k_{z}=p\pi/L$ with $p\geq1$. The theorem completes that description by including the other three families. Additional frequency coincidences are possible. A TM mode with $p=0$ can share an SL level, as TM$_{330}$ does with SL$_{114}$, SL$_{141}$ and SL$_{411}$ in a cube, while TE and SL frequencies coincide for all positive indices in a rectangular cavity. These coincidences depend on the geometry and do not follow from a general pairing rule. There is no scalar TEM branch because a Dirichlet eigenvalue is never zero. Frequencies alone therefore cannot reveal the scalar sector, although the paired modes can be distinguished by their fields. The SL mode has no magnetic field, whereas the TM mode has no scalar.
\\

This pairing follows from the uniform cross-section. In a spherical cavity of radius $R$, the condition $C=0$ on the wall becomes $j_{l}(\omega R/c)=0$. For $l\geq1$, this is also the condition for the TE modes of the sphere, whereas for $l=0$ it gives $\omega R/c=n\pi$, frequencies shared by no electromagnetic mode. In every cavity, the SL frequencies are the Dirichlet frequencies of the volume, but their positions relative to the Maxwell frequencies depend on the cavity shape.

\subsection{Sources}
\label{sec:excite}

Equation~\eqref{eq:Cwave} shows that the anomaly $\Lambda$ is the only source of the scalar, as discussed in Section~9 of Ref.~\cite{rentzberguide}. Expanding $C=\sum_{j}a_{j}(t)\,\phi_{j}$ gives
\begin{equation}
\frac{d^{2}a_{j}}{dt^{2}}+\omega_{j}^{2}\,a_{j}=\frac{\Lambda_{j}(t)}{\eps},
\qquad
\Lambda_{j}(t)=\int_{V}\phi_{j}\,\Lambda\,dV .
\label{eq:driven}
\end{equation}
At resonance, the lossless cavity has no steady state. A drive $\Lambda_{j}\cos(\omega_{j}t)$ switched on at $t=0$ produces $a_{j}=\Lambda_{j}\,t\sin(\omega_{j}t)/(2\eps\omega_{j})$, whose amplitude grows until losses outside the model limit it.
\\

\begin{corollary}[Dynamical Emptiness]
\label{cor:empty}
Let the matter inside the cavity conserve charge, $\Lambda\equiv0$, and let $C=\partial C/\partial t=0$ throughout $V$ at $t=0$. Then $C\equiv0$ for all $t$, whatever charges and currents are present, and the fields obey Maxwell's equations.
\end{corollary}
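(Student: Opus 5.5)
The plan is to apply uniqueness to the scalar wave equation~\eqref{eq:Cwave} alone, rather than to the full system. With $\Lambda\equiv0$, Eq.~\eqref{eq:Cwave} is the homogeneous wave equation $\nabla^{2}C-\muz\eps\,\partial^{2}C/\partial t^{2}=0$ in $V$. It holds whatever $\rt$ and $\Jt$ are, because it was derived by taking the divergence of the extended Amp\`ere law and using the extended Gauss law with arbitrary sources; only $\Lambda$ survives. The wall condition Eq.~\eqref{eq:dirichletdata} supplies $C=0$ on $S$, and the hypothesis supplies zero Cauchy data.

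First step: an energy estimate for the scalar alone. Define
\begin{equation*}
W(t)=\frac12\int_{V}\Big(\muz\eps\Big(\pdt{C}\Big)^{2}+|\grad C|^{2}\Big)dV .
\end{equation*}
Differentiate $W$ and integrate by parts using the wave equation. This gives $dW/dt=\oint_{S}(\partial C/\partial t)(\partial C/\partial n)\,dA$. The integrand vanishes because $C=0$ on $S$ for all $t$, and hence $\partial C/\partial t=0$ there. So $W$ is constant. At $t=0$ we have $C\equiv0$, so $\grad C=0$, and $\partial C/\partial t=0$ as well, giving $W(0)=0$. Therefore $W(t)=0$, which makes $C$ constant in space and time, and the boundary value then forces $C\equiv0$.

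As an alternative or check, expand $C=\sum_{j}a_{j}\phi_{j}$ and use Eq.~\eqref{eq:driven} with $\Lambda_{j}=0$. Each $a_{j}$ then solves a homogeneous oscillator equation with $a_{j}(0)=\dot a_{j}(0)=0$, so $a_{j}\equiv0$. Completeness of the $\phi_{j}$ gives $C\equiv0$.

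Second step: with $C\equiv0$, substituting into Eq.~\eqref{eq:eed} gives exactly Maxwell's equations with sources $\rt,\Jt$. Consistency of Gauss's law then holds automatically, since $\Lambda=0$.

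The main obstacle is regularity: justifying the integration by parts, or the termwise differentiation of the expansion, for solutions that are only weak near edges of the piecewise smooth wall. I would state the result for finite-energy solutions, for which the standard energy identity for the Dirichlet wave equation holds~\cite{mclean}. I would also note that the Dirichlet condition must hold for all $t$, which is what gives $\partial C/\partial t=0$ on $S$.
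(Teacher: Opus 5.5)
Your proposal is correct and takes the same approach as the paper. Both reduce to the homogeneous wave equation for $C$ alone, with $C=0$ on $S$ and zero Cauchy data, and conclude $C\equiv0$ because the conserved integral $\int_{V}[\muz\eps(\partial C/\partial t)^{2}+|\grad C|^{2}]\,dV$ vanishes initially. Your explicit boundary flux $\oint_{S}(\partial C/\partial t)(\partial C/\partial n)\,dA$, the eigenfunction check via Eq.~\eqref{eq:driven} and the regularity caveat are details the paper leaves implicit.
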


\begin{proof}
With $\Lambda=0$, Eq.~\eqref{eq:Cwave} is the homogeneous wave equation for $C$ alone, with $C=0$ on $S$ and zero initial data. The conserved quadratic integral $\int_{V}[\muz\eps(\partial C/\partial t)^{2}+|\grad C|^{2}]\,dV$ is initially zero, so $C\equiv0$.
\end{proof}
Searching for the scalar sector with a resonator therefore tests local charge conservation. An aperture, a port or an incoming scalar wave defines a different boundary-value problem, to which Corollary~\ref{cor:empty} does not apply.

\section{Canonical Cavities}
\label{sec:canonical}

This section derives the mode families of rectangular, circular and coaxial cavities from the canonical guides in Ref.~\cite{rentzberguide}. Each family has one generating field, $E_{z}$ for TM, $B_{z}$ for TE and $C$ for SL, and in a source-free cavity each generator $F$ obeys the Helmholtz equation
\begin{equation}
\nabla^{2}F+\wme\,F=0\qquad\text{in }V .
\label{eq:helm3d}
\end{equation}
In a cavity of uniform cross-section the trial solution $F=\psi(x,y)\,Z(z)$ separates it into
\begin{equation}
\gradt^{2}\psi+\gamma^{2}\psi=0,
\qquad
Z''+k_{z}^{2}Z=0,
\qquad
\gamma^{2}+k_{z}^{2}=\wme ,
\label{eq:helmsep}
\end{equation}
with a prime for the derivative with respect to the argument. For each geometry, the transverse profiles are those of the corresponding guide in Ref.~\cite{rentzberguide}, and the wall conditions determine the four families in turn. The remaining components follow from Eqs.~\eqref{eq:tmrel} to~\eqref{eq:temrel} for the Maxwell families and from Eq.~\eqref{eq:slreduce} for the SL family. As in Ref.~\cite{rentzberguide}, $n$ is the first transverse or azimuthal index, $m$ the second or radial index, and $p$ the axial index. Throughout $k_{z}=p\pi/L$, the factor $\eiwt$ is understood in every listing, and $E_{0}$, $B_{0}$ and $C_{0}$ are the amplitudes of the TM, TE and SL modes.

\subsection{Rectangular Cavity}
\label{sec:box}

Let $V=(0,a)\times(0,b)\times(0,L)$. The transverse profiles are the sine and cosine products of the rectangular guide described in Section~6.1 of Ref.~\cite{rentzberguide}, while the end walls select $\sin(k_{z}z)$ or $\cos(k_{z}z)$. Each wall condition eliminates either the cosine or the sine term in one factor and fixes
\begin{equation}
\begin{gathered}
k_{x}=\frac{n\pi}{a},
\qquad
k_{y}=\frac{m\pi}{b},
\qquad
k_{z}=\frac{p\pi}{L},
\qquad
\gamma_{nm}^{2}=k_{x}^{2}+k_{y}^{2},\\[2pt]
\omega_{nmp}^{2}\,\muz\eps=\gamma_{nm}^{2}+k_{z}^{2}=\pi^{2}\Big(\frac{n^{2}}{a^{2}}+\frac{m^{2}}{b^{2}}+\frac{p^{2}}{L^{2}}\Big).
\end{gathered}
\label{eq:boxk}
\end{equation}
The TM generator is $E_{z}=E_{0}\sin(n\pi x/a)\sin(m\pi y/b)\cos(p\pi z/L)$. The conditions $E_{z}=0$ on the side walls and $\partial E_{z}/\partial z=0$ on the end walls require $n,m\geq1$ and $p\geq0$.
The TE generator is $B_{z}=B_{0}\cos(n\pi x/a)\cos(m\pi y/b)\sin(p\pi z/L)$. The conditions $\partial B_{z}/\partial n=0$ on the side walls and $B_{z}=0$ on the end walls require $n,m\geq0$, $n+m\geq1$ and $p\geq1$.
The rectangular cross-section is bounded by a single conductor, so the cavity has no TEM mode.
\\

The SL generator is $C$, with $C=0$ on all six walls. On the side walls this is the TM condition. On the end walls it removes the cosine of $z$, where the TM condition removed the sine. The resulting profile is a product of three sines, with the wavenumbers of Eq.~\eqref{eq:boxk}. The normalized Dirichlet eigenfunctions in potential form are
\begin{equation}
\phi_{nmp}=\sqrt{\frac{8}{abL}}\,\sin\!\Big(\frac{n\pi x}{a}\Big)\sin\!\Big(\frac{m\pi y}{b}\Big)\sin\!\Big(\frac{p\pi z}{L}\Big),
\qquad n,m,p\geq1 ,
\label{eq:boxphi}
\end{equation}
and Eq.~\eqref{eq:slreduce} gives the complete SL$_{nmp}$ field,
\begin{equation}
\begin{aligned}
E_{x} &= -\frac{ik_{x}C_{0}}{\omega\muz\eps}\,\cos\!\Big(\frac{n\pi x}{a}\Big)\sin\!\Big(\frac{m\pi y}{b}\Big)\sin\!\Big(\frac{p\pi z}{L}\Big),\\[2pt]
E_{y} &= -\frac{ik_{y}C_{0}}{\omega\muz\eps}\,\sin\!\Big(\frac{n\pi x}{a}\Big)\cos\!\Big(\frac{m\pi y}{b}\Big)\sin\!\Big(\frac{p\pi z}{L}\Big),\\[2pt]
E_{z} &= -\frac{ik_{z}C_{0}}{\omega\muz\eps}\,\sin\!\Big(\frac{n\pi x}{a}\Big)\sin\!\Big(\frac{m\pi y}{b}\Big)\cos\!\Big(\frac{p\pi z}{L}\Big),\\[2pt]
B_{x} &= B_{y}=B_{z}=0,\\[2pt]
C &= C_{0}\sin\!\Big(\frac{n\pi x}{a}\Big)\sin\!\Big(\frac{m\pi y}{b}\Big)\sin\!\Big(\frac{p\pi z}{L}\Big),
\end{aligned}
\label{eq:boxsl}
\end{equation}
at the frequency of Eq.~\eqref{eq:boxk}. The electric field has the wall pattern of the TM mode with the same indices. On the wall $x=0$, where $\nc=\xhat$,
\begin{equation}
\sigfs(0,y,z)=-\frac{ik_{x}C_{0}}{\omega\muz}\,\sin\!\Big(\frac{m\pi y}{b}\Big)\sin\!\Big(\frac{p\pi z}{L}\Big),
\label{eq:boxsigma}
\end{equation}
and the other walls follow in the same way. The SL electric field is a gradient whose components are weighted by $k_{x}$, $k_{y}$ and $k_{z}$, whereas the TM field has zero divergence. The two fields therefore have the same sine and cosine factors but different relative component amplitudes. When $n$, $m$ and $p$ are all at least one, the TM, TE and SL modes share the same frequency, giving three modes in the extended theory compared with two in Maxwell theory.

\subsection{Circular Cavity}
\label{sec:cyl}

Consider the cylindrical cavity $r<a$, $0<z<L$ in cylindrical coordinates $(r,\varphi,z)$. Its transverse profiles, $J_{n}(\gamma r)\cos(n\varphi)$ or $J_{n}(\gamma r)\sin(n\varphi)$, are those of the circular guide described in Section~6.2 of Ref.~\cite{rentzberguide}. Here $x_{nm}$ is the $m$-th positive zero of $J_{n}$, and $x'_{nm}$ is the corresponding zero of $J_{n}'$. The modes below use the cosine branch. For $n\geq1$, the sine branch provides a second mode at every level. The Cartesian components follow from $E_{x}=E_{r}\cos\varphi-E_{\varphi}\sin\varphi$ and $E_{y}=E_{r}\sin\varphi+E_{\varphi}\cos\varphi$, with the same relations applying to $\vB$.
The TM generator is $E_{z}=E_{0}J_{n}(x_{nm}r/a)\cos(n\varphi)\cos(p\pi z/L)$, giving $\gamma_{nm}=x_{nm}/a$, $\omega_{nmp}=c\,(x_{nm}^{2}/a^{2}+k_{z}^{2})^{1/2}$ and $p\geq0$.
The TE generator is $B_{z}=B_{0}J_{n}(x'_{nm}r/a)\cos(n\varphi)\sin(p\pi z/L)$, giving $\gamma_{nm}'=x'_{nm}/a$, $\omega_{nmp}=c\,(x_{nm}'^{\,2}/a^{2}+k_{z}^{2})^{1/2}$ and $p\geq1$. Since $J_{0}'=-J_{1}$, the roots satisfy $x'_{0m}=x_{1m}$, so every TE$_{0mp}$ level coincides with the twofold TM$_{1mp}$ level.
The disk is bounded by a single conductor, so the circular cavity has no TEM mode.
\\

The SL generator is $C$, with $C=0$ on the curved wall and on both end walls. The curved wall gives the TM condition $J_{n}(\gamma_{nm} a)=0$ and $\gamma_{nm}=x_{nm}/a$, and the end walls use the sine along $z$. Equation~\eqref{eq:slreduce} gives the complete SL$_{nmp}$ field,
\begin{equation}
\begin{aligned}
C &= C_{0}\,J_{n}(\gamma_{nm} r)\cos(n\varphi)\sin\!\Big(\frac{p\pi z}{L}\Big),\\[2pt]
E_{r} &= -\frac{i\gamma_{nm}C_{0}}{\omega\muz\eps}\,J_{n}'(\gamma_{nm} r)\cos(n\varphi)\sin\!\Big(\frac{p\pi z}{L}\Big),\\[2pt]
E_{\varphi} &= \frac{inC_{0}}{\omega\muz\eps\,r}\,J_{n}(\gamma_{nm} r)\sin(n\varphi)\sin\!\Big(\frac{p\pi z}{L}\Big),\\[2pt]
E_{z} &= -\frac{ik_{z}C_{0}}{\omega\muz\eps}\,J_{n}(\gamma_{nm} r)\cos(n\varphi)\cos\!\Big(\frac{p\pi z}{L}\Big),\\[2pt]
B_{r} &= B_{\varphi}=B_{z}=0,
\end{aligned}
\label{eq:cylsl}
\end{equation}
with $\omega_{nmp}=c\,(x_{nm}^{2}/a^{2}+k_{z}^{2})^{1/2}$ and $p\geq1$. 
\\

TM modes with $p=0$ have no SL partner. These include TM$_{010}$, the lowest mode of a short cavity. At the shared level of TE$_{0mp}$ and TM$_{1mp}$, the two SL$_{1mp}$ modes increase the total number of modes from three to five.

\subsection{Coaxial Cavity}
\label{sec:coax}

Consider the region $a<r<b$, $0<z<L$ between two coaxial cylinders, closed by two end walls. The axis lies outside the cavity, so the radial functions are the Bessel cross products of the coaxial line described in Section~6.3 of Ref.~\cite{rentzberguide}. For TM and SL, the radial function vanishes on both cylinders,
\begin{equation}
\begin{aligned}
u_{nm}(r) &= J_{n}(\gamma_{nm} r)\,Y_{n}(\gamma_{nm} a)-Y_{n}(\gamma_{nm} r)\,J_{n}(\gamma_{nm} a),\\[2pt]
u_{nm}'(r) &= J_{n}'(\gamma_{nm} r)\,Y_{n}(\gamma_{nm} a)-Y_{n}'(\gamma_{nm} r)\,J_{n}(\gamma_{nm} a),
\end{aligned}
\label{eq:coaxuv}
\end{equation}
where $\gamma_{nm}=\chi_{nm}/a$ and $\chi_{nm}$ is the $m$-th positive root of the characteristic equation of Ref.~\cite{rentzberguide},
\begin{equation}
F_{n}\!\left(\chi;\frac{b}{a}\right)=J_{n}(\chi)\,Y_{n}\!\left(\frac{b\chi}{a}\right)-Y_{n}(\chi)\,J_{n}\!\left(\frac{b\chi}{a}\right)=0 .
\label{eq:coaxroots}
\end{equation}
Here $u_{nm}'$ denotes the radial derivative of $u_{nm}$ divided by $\gamma_{nm}$. For TE, the radial derivative vanishes on both cylinders, giving the cross product $v_{nm}$ of Ref.~\cite{rentzberguide}, with $\gamma_{nm}'=\chi_{nm}'/a$ and $\chi_{nm}'$ the roots of its TE characteristic equation. For the example used throughout, with $b=2a$ and $L=2b$, the values $\chi_{01}=3.1230$, $\chi_{11}=3.1966$ and $\chi_{21}=3.4069$ are close to the narrow-gap estimate $\pi a/(b-a)$. Similarly, $\chi_{11}'=0.6773$, $\chi_{21}'=1.3406$ and $\chi_{31}'=1.9789$ are close to $2na/(a+b)$. The equations below use the cosine branch, and the sine branch follows as in Section~\ref{sec:cyl}.
The TM generator is $E_{z}=E_{0}\,u_{nm}(r)\cos(n\varphi)\cos(p\pi z/L)$, with $\omega_{nmp}=c\,(\gamma_{nm}^{2}+k_{z}^{2})^{1/2}$ and $p\geq0$.
The TE generator is $B_{z}=B_{0}\,v_{nm}(r)\cos(n\varphi)\sin(p\pi z/L)$, with $\omega_{nmp}=c\,(\gamma_{nm}'^{\,2}+k_{z}^{2})^{1/2}$ and $p\geq1$. Since $J_{0}'=-J_{1}$ and $Y_{0}'=-Y_{1}$, the TE condition for $n=0$ coincides with the TM condition for $n=1$. Thus $\gamma_{0m}'=\gamma_{1m}$, and the TE$_{0mp}$ levels coincide with the TM$_{1mp}$ levels, as in the circular cavity.
The two cylinders support one TEM family. With $\gamma=0$, the potential obeys $\gradt^{2}\Phi=0$ in the ring, with $\Phi=V_{0}$ at $r=a$ and $\Phi=0$ at $r=b$, so $\Phi=V_{0}\ln(b/r)/\ln(b/a)$ as in Ref.~\cite{rentzberguide}. Equation~\eqref{eq:temrel} with $E_{0}\Phi_{0}=\Phi$ gives the complete field,
\begin{equation}
E_{r}=\frac{V_{0}}{r\ln(b/a)}\,\sin\!\Big(\frac{p\pi z}{L}\Big),\qquad B_{\varphi}=-\frac{iV_{0}}{c\,r\ln(b/a)}\,\cos\!\Big(\frac{p\pi z}{L}\Big),
\label{eq:coaxtem}
\end{equation}
and all other components are zero, with $\omega_{p}=p\pi c/L$ and $p\geq1$, set by the length alone. The first TEM mode lies below every TE mode. It is also the fundamental mode of the coaxial cavity whenever it lies below TM$_{010}$, which occurs for $L>\pi/\gamma_{01}$. The zero-frequency member of the family is the static field $B_{\varphi}\propto1/r$ of Section~\ref{sec:classicalmodes}.
\\

The SL generator $C$ vanishes on both cylinders, which are the TM conditions, so the radial function is again $u_{nm}$ with the roots of Eq.~\eqref{eq:coaxroots}, and the end walls select the sine along $z$. Equation~\eqref{eq:slreduce} gives the complete SL$_{nmp}$ field,
\begin{equation}
\begin{aligned}
E_{r} &= -\frac{iC_{0}\gamma_{nm}}{\omega\muz\eps}\,u_{nm}'(r)\cos(n\varphi)\sin\!\Big(\frac{p\pi z}{L}\Big),\\[2pt]
E_{\varphi} &= \frac{inC_{0}}{\omega\muz\eps\,r}\,u_{nm}(r)\sin(n\varphi)\sin\!\Big(\frac{p\pi z}{L}\Big),\\[2pt]
E_{z} &= -\frac{ik_{z}C_{0}}{\omega\muz\eps}\,u_{nm}(r)\cos(n\varphi)\cos\!\Big(\frac{p\pi z}{L}\Big),\\[2pt]
B_{r} &= B_{\varphi}=B_{z}=0,\\[2pt]
C &= C_{0}\,u_{nm}(r)\cos(n\varphi)\sin\!\Big(\frac{p\pi z}{L}\Big),
\end{aligned}
\label{eq:coaxsl}
\end{equation}
with $\omega_{nmp}=c\,(\gamma_{nm}^{2}+k_{z}^{2})^{1/2}$ and $p\geq1$. 
\\

The TEM family has no scalar partner. Its potential takes different constants on the two cylinders, and a scalar with that profile would violate Eq.~\eqref{eq:dirichletdata} and, by Proposition~\ref{prop:wall}, the conductor itself.

\subsection{Resonant Levels}
\label{sec:freq}

Table~\ref{tab:levels} lists the levels of the three cavities for every family, by index, starting from the lowest. The rectangular cavity has $(a,b,L)=(1,\,0.75,\,0.5)\,a$, the circular cavity has $L=2a$, and the coaxial cavity has $b=2a$ and $L=2b$. All entries are roots of the closed-form conditions above.
\begin{table}[!tbp]
\caption{Lowest resonant levels of the three canonical cavities by family and index $(n,m,p)$. Rectangular values are $\omega a/c$ for $b=0.75a$ and $L=0.5a$, circular values $\omega a/c$ for $L=2a$, and coaxial values $\omega b/c$ for $b=2a$ and $L=2b$. A blank means no mode in that family, SL and TM coincide where both exist, and SL modes require $p\geq1$. The lower right block lists the coaxial TEM levels and the lowest level of each family.}
\label{tab:levels}
\centering
{\setlength{\tabcolsep}{4.5pt}\renewcommand{\arraystretch}{0.92}
\begin{tabular}{@{}lccc@{\hspace{1.6em}}lccc@{}}
\toprule
\multicolumn{4}{@{}l}{\textit{Rectangular Cavity, $\omega a/c$}} & \multicolumn{4}{l@{}}{\textit{Circular Cavity, $\omega a/c$}}\\
\cmidrule(r{1.6em}){1-4}\cmidrule(l){5-8}
$(n,m,p)$ & TE & TM & SL & $(n,m,p)$ & TE & TM & SL\\
\midrule
$(1,1,0)$ & & 5.2360 & & $(0,1,0)$ & & 2.4048 & \\
$(1,0,1)$ & 7.0248 & & & $(1,1,1)$ & 2.4202 & 4.1412 & 4.1412\\
$(0,1,1)$ & 7.5514 & & & $(0,1,1)$ & 4.1412 & 2.8724 & 2.8724\\
$(2,1,0)$ & & 7.5514 & & $(2,1,1)$ & 3.4345 & 5.3705 & 5.3705\\
$(1,1,1)$ & 8.1789 & 8.1789 & 8.1789 & $(1,1,2)$ & 3.6414 & 4.9550 & 4.9550\\
$(2,0,1)$ & 8.8858 & & & $(1,1,0)$ & & 3.8317 & \\
$(1,2,0)$ & & 8.9473 & & $(0,1,2)$ & 4.9550 & 3.9564 & 3.9564\\
$(2,1,1)$ & 9.8236 & 9.8236 & 9.8236 & $(2,1,2)$ & 4.3815 & 6.0203 & 6.0203\\
$(3,1,0)$ & & 10.3137 & & $(3,1,1)$ & 4.4852 & 6.5707 & 6.5707\\
$(0,2,1)$ & 10.4720 & & & $(1,1,3)$ & 5.0593 & 6.0736 & 6.0736\\
$(2,2,0)$ & & 10.4720 & & $(2,1,0)$ & & 5.1356 & \\
$(1,2,1)$ & 10.9331 & 10.9331 & 10.9331 & $(0,1,3)$ & 6.0736 & 5.2905 & 5.2905\\
\midrule
\multicolumn{4}{@{}l}{\textit{Coaxial Cavity, $\omega b/c$}} & \multicolumn{4}{l@{}}{\textit{Coaxial TEM Levels, $\omega b/c$}}\\
\cmidrule(r{1.6em}){1-4}\cmidrule(l){5-8}
$(n,m,p)$ & TE & TM & SL & \multicolumn{1}{c}{$p$} & \multicolumn{3}{c}{$\omega b/c$}\\
\midrule
$(1,1,1)$ & 2.0743 & 6.5833 & 6.5833 & \multicolumn{1}{c}{1} & \multicolumn{3}{c}{1.5708}\\
$(2,1,1)$ & 3.1075 & 6.9926 & 6.9926 & \multicolumn{1}{c}{2} & \multicolumn{3}{c}{3.1416}\\
$(1,1,2)$ & 3.4212 & 7.1233 & 7.1233 & \multicolumn{1}{c}{3} & \multicolumn{3}{c}{4.7124}\\
$(2,1,2)$ & 4.1302 & 7.5032 & 7.5032 & \multicolumn{1}{c}{4} & \multicolumn{3}{c}{6.2832}\\
$(3,1,1)$ & 4.2581 & 7.6214 & 7.6214 & & & & \\
$(1,1,3)$ & 4.9032 & 7.9422 & 7.9422 & \multicolumn{4}{l@{}}{\textit{Lowest Level of Each Family}}\\
\cmidrule(l){5-8}
$(0,1,0)$ & & 6.2461 & & Cavity & TE & TM & SL\\
$(1,1,0)$ & & 6.3932 & & Rectangular & 7.0248 & 5.2360 & 8.1789\\
$(0,1,1)$ & 6.5833 & 6.4406 & 6.4406 & Circular & 2.4202 & 2.4048 & 2.8724\\
$(0,1,2)$ & 7.1233 & 6.9916 & 6.9916 & Coaxial & 2.0743 & 6.2461 & 6.4406\\
$(2,1,0)$ & & 6.8138 & & Coaxial TEM & \multicolumn{3}{c}{1.5708}\\
$(0,1,3)$ & 7.9422 & 7.8243 & 7.8243 & & & & \\
\botrule
\end{tabular}}
\end{table}

Four electromagnetic modes lie below the first SL level in the rectangular cavity. In the circular cavity the three modes below it are TM$_{010}$ and the twofold TE$_{111}$, as Proposition~\ref{prop:rohleder} guarantees. The coaxial cavity has $18$ electromagnetic levels below its first SL level, four TEM, twelve TE and the TM levels $010$ and $110$ that are uniform along the axis, or $31$ modes when the $\cos(n\varphi)$ and $\sin(n\varphi)$ partners are counted separately. A TM or SL profile must vanish on both cylinders and fit half a radial wavelength across the gap, which raises its frequency to about $\omega b/c=\pi b/(b-a)$, while a TE or TEM profile can vary slowly across the gap.

\subsection{Field Patterns}
\label{sec:illus}

Figure~\ref{fig:maps} compares the lowest TE mode with the lowest TM and SL pair in the circular cavity with $L=2a$. The TE mode lies below the first SL level. The TM and SL modes share a frequency, but their field patterns distinguish them. In the SL mode, the scalar reaches its maximum at the center and falls to zero on every wall. Its electric field oscillates parallel to $\grad C$, meets the walls perpendicularly and has no circulation. The wall charge has the same sign everywhere, in agreement with Eq.~\eqref{eq:QS}.
\FloatBarrier
\begin{figure}[!t]
\centering
\includegraphics[width=0.9\textwidth]{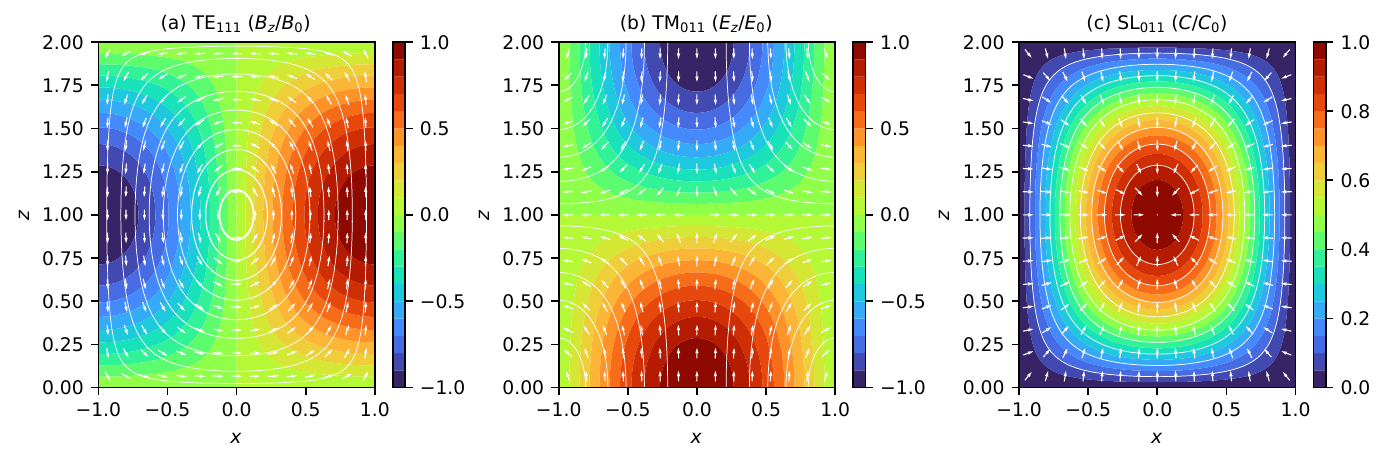}
\caption{The TE$_{111}$, TM$_{011}$ and SL$_{011}$ modes of the circular cavity with $L=2a$ in the plane $y=0$, with time and phase factors omitted and $B_0$, $E_0$ and $C_0$ the peak magnitudes of the profiles shown. The TE mode has frequency $\omega a/c=2.4202$, while the TM and SL modes share $\omega a/c=2.8724$. \textbf{(a)} TE$_{111}$, cosine angular branch. Colors show $B_z/B_0$, white curves and arrows the in-plane magnetic field, and the electric field is perpendicular to the plane. \textbf{(b)} TM$_{011}$. Colors show $E_z/E_0$, white curves and arrows the electric field, and $B_\varphi$ is perpendicular to the plane, a quarter period out of step with $\vE$ by the factor $i$ in Eq.~\eqref{eq:tmrel}. \textbf{(c)} SL$_{011}$. Colors and white level curves show $C/C_0$ and arrows $\grad C$, and the electric field $\vE=-(i/(\omega\muz\eps))\grad C$ is a quarter period out of step with $C$, while $\vB=\zero$.}
\label{fig:maps}
\end{figure}

\section{Trapped Modes of Divergence Cleaning}
\label{sec:cleaning}

In the cleaning system of Section~11 of Ref.~\cite{rentzberguide}, the auxiliary scalar $\psi$ takes the place of $C$ in Eq.~\eqref{eq:eed}. The cleaning speed $v$ enters through the term $v^{2}\grad\psi$ in the equation for $\partial\vE/\partial t$, replacing $c^{2}\grad C$, and the damping time $\tau$ through the term $-\psi/\tau$ in the equation for $\partial\psi/\partial t$. With $v=c$ and $\tau^{-1}=0$ the equations return to Eq.~\eqref{eq:eed}, with $\psi$ driven by the same anomaly $\Lambda$. The cleaning scalar is set to zero at a perfect conductor, as in Refs.~\cite{munz2000,munz2000cpc,crestetto2012}, so the wall condition is again Eq.~\eqref{eq:dirichletdata}, and the walls of a closed cavity trap the scalar inside.
\\

\begin{proposition}[Trapped Modes]
\label{prop:trapped}
Let $(\vE,\vB,\psi)$ solve the source-free cleaning system in a cavity $V$ with $\nhat\times\vE=\zero$ and $\psi=0$ on $S$, for any $v$ and any $\tau$. With $\Phi_{\mathrm{SL}}$ as in Proposition~\ref{prop:decomp},
\begin{equation}
(\vE,\vB,\psi)=(\vE_{\mathrm{M}},\vB,0)+(-\grad\Phi_{\mathrm{SL}},\zero,\psi),
\qquad
\pdt{}\,\grad\Phi_{\mathrm{SL}}=v^{2}\grad\psi .
\label{eq:trappedsplit}
\end{equation}
The first part solves Maxwell's equations and does not depend on $v$ or $\tau$. The second is the trapped part. With $\psi=\sum_{j}a_{j}(t)\,\phi_{j}$ its amplitudes are damped oscillators,
\begin{equation}
\frac{d^{2}a_{j}}{dt^{2}}+\frac{1}{\tau}\,\frac{da_{j}}{dt}+v^{2}\lambda_{j}\,a_{j}=0 ,
\label{eq:trappedode}
\end{equation}
\end{proposition}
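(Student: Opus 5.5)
The argument mirrors the proof of Proposition~\ref{prop:decomp}, with the cleaning system in place of Eq.~\eqref{eq:eed}. I write the source-free cleaning system explicitly:
\begin{equation*}
\pdt{\vE}=c^{2}\curl\vB-v^{2}\grad\psi,\qquad
\pdt{\vB}=-\curl\vE,\qquad
\pdt{\psi}=-\dive\vE-\frac{\psi}{\tau}.
\end{equation*}
The normalization is fixed so that $v=c$ and $\tau^{-1}=0$ return Eq.~\eqref{eq:eed} with $\psi=C$. The exact placement of constants only rescales $\lambda_{j}$ by the speed, which I will check against the stated ODE.

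First I take the divergence of the $\vE$ equation, which gives
\begin{equation*}
\pdt{}\dive\vE=-v^{2}\nabla^{2}\psi .
\end{equation*}
Since $\dive\vE=-\nabla^{2}\Phi_{\mathrm{SL}}$, I define $g=\partial\Phi_{\mathrm{SL}}/\partial t-v^{2}\psi$. Then $\nabla^{2}g=0$ in $V$. Both $\Phi_{\mathrm{SL}}$ and $\psi$ vanish on $S$ for all $t$, so $g=0$ on $S$, hence $g\equiv0$. Taking the gradient gives the relation in Eq.~\eqref{eq:trappedsplit}. The damping term never enters this step, so the result holds for every $\tau$.

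Next I verify the two parts separately. For the trapped part $(-\grad\Phi_{\mathrm{SL}},\zero,\psi)$, the $\vE$ equation is exactly the relation just derived. Faraday's law holds because a gradient has no curl. The $\psi$ equation is inherited from the full system, since $\dive\vE_{\mathrm{SL}}=\dive\vE$. The wall conditions hold because $\Phi_{\mathrm{SL}}=0$ and $\psi=0$ on $S$, so the tangential gradient vanishes.

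For the Maxwell part, subtracting gives
\begin{equation*}
\partial\vE_{\mathrm{M}}/\partial t=c^{2}\curl\vB,\qquad \partial\vB/\partial t=-\curl\vE_{\mathrm{M}},\qquad \dive\vE_{\mathrm{M}}=0 .
\end{equation*}
In addition, $\nhat\times\vE_{\mathrm{M}}=\zero$ on $S$. The condition $\nhat\cdot\vB=0$ is preserved in time by Faraday's law. The initial data for $(\vE_{\mathrm{M}},\vB)$ are determined by the initial $\vE$ through the Dirichlet problem, which involves neither $v$ nor $\tau$. By uniqueness of the Maxwell cavity problem, as in Proposition~\ref{prop:energy} with $C=0$, this part is therefore independent of $v$ and $\tau$.

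Finally I derive the mode equations. Differentiating the $\psi$ equation in time gives
\begin{equation*}
\partial_{t}^{2}\psi+\tau^{-1}\partial_{t}\psi=-\partial_{t}\dive\vE=v^{2}\nabla^{2}\psi .
\end{equation*}
This is a damped wave equation with the Dirichlet condition. I expand $\psi$ in the complete orthonormal $\phi_{j}$ of Eq.~\eqref{eq:dirichlet}, project onto each $\phi_{j}$, and use Green's second identity. The boundary terms vanish because both $\psi$ and $\phi_{j}$ vanish on $S$, and this yields Eq.~\eqref{eq:trappedode}.

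The main obstacle I expect is regularity and justification. The argument needs $\Phi_{\mathrm{SL}}$ to be differentiable in $t$ with the Dirichlet condition preserved, and it needs the expansion to be differentiated termwise. I would handle this by working with the coefficients $a_{j}=\int_{V}\phi_{j}\psi\,dV$ directly and differentiating under the integral, rather than differentiating the series. A second point is that $g$ harmonic with zero boundary data implies $g\equiv0$ even when the wall is disconnected; this still holds because each conductor carries the value zero.
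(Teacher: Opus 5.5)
Your proposal is correct and takes the same route as the paper. You use the same harmonic function $g$ (with the opposite sign) to obtain the relation in Eq.~\eqref{eq:trappedsplit}, then differentiate the $\psi$ equation in time and project onto $\phi_{j}$. Your explicit checks of the two parts, and your uniqueness argument that the Maxwell part does not depend on $v$ or $\tau$, fill in steps the paper leaves implicit.
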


\begin{proof}
The argument follows Proposition~\ref{prop:decomp}, with $g=v^{2}\psi-\partial\Phi_{\mathrm{SL}}/\partial t$. Taking the divergence of the cleaning equation for $\partial\vE/\partial t$ gives $\nabla^{2}g=0$ with $g=0$ on $S$. Thus $g\equiv0$, which gives the relation in Eq.~\eqref{eq:trappedsplit}. Taking the time derivative of the equation for $\partial\psi/\partial t$ then gives $\partial^{2}\psi/\partial t^{2}+\tau^{-1}\partial\psi/\partial t=v^{2}\nabla^{2}\psi$. Projecting this equation onto $\phi_{j}$ yields Eq.~\eqref{eq:trappedode}.
\end{proof}
In the continuous problem with this wall condition, cleaning therefore never touches the physical field in a closed conducting cavity, and a scheme preserves this only with the compatible discrete split of Section~\ref{sec:damping}, acting on the Dirichlet modes of the volume alone. A mode with $v^{2}\lambda_{j}>1/4\tau^{2}$ rings at $\Omega_{j}=(v^{2}\lambda_{j}-1/4\tau^{2})^{1/2}$ and decays at the rate $1/2\tau$, and a mode below that threshold is overdamped. For a Dirichlet eigenpair $(\lambda_{j},\phi_{j})$ and an amplitude $\psi_{0}$,
\begin{equation}
\begin{gathered}
\psi = \psi_{0}\,\phi_{j}(\vr)\,e^{-t/2\tau}\cos(\Omega_{j}t),\\[2pt]
\Omega_{j}^{2}=v^{2}\lambda_{j}-\frac{1}{4\tau^{2}},\\[2pt]
\vE = \frac{\psi_{0}}{\lambda_{j}}\,e^{-t/2\tau}\Big[\frac{1}{2\tau}\cos(\Omega_{j}t)-\Omega_{j}\sin(\Omega_{j}t)\Big]\grad\phi_{j}(\vr),\\[2pt]
\vB = \zero,
\end{gathered}
\label{eq:benchmark}
\end{equation}
solves the source-free cleaning system exactly, for any cavity and any $v$ and $\tau$ satisfying $v^{2}\lambda_{j}>1/4\tau^{2}$. It starts from $\psi=\psi_{0}\phi_{j}$, $\vE=(\psi_{0}/2\tau\lambda_{j})\grad\phi_{j}$ and $\vB=\zero$, and a code started from these data must ring at $\Omega_{j}$ and decay at the rate $1/2\tau$. Below this threshold, replace $\cos(\Omega_{j}t)$ by $\cosh(\kappa_{j}t)$ and $\Omega_{j}\sin(\Omega_{j}t)$ by $-\kappa_{j}\sinh(\kappa_{j}t)$, with $\kappa_{j}^{2}=1/4\tau^{2}-v^{2}\lambda_{j}$. The mode then decays without ringing, with a long-time decay rate of $1/2\tau-\kappa_{j}$. At the threshold itself the amplitude is simply $e^{-t/2\tau}$ for these initial data, since the term proportional to $t$ that critical damping allows in general has zero coefficient here. Section~\ref{sec:canonical} makes the benchmark explicit for the three cavities, and the uniform-cavity form reduces to that of Ref.~\cite{rentzberguide}.
\\

At $v=c$ without damping the trapped frequencies are the SL frequencies, so in a uniform cavity each sits on a TM frequency and only the fields tell constraint error from physics. Changing $v$ multiplies the SL frequencies in Table~\ref{tab:levels} by $v/c$ while the Maxwell frequencies stay fixed, so the trapped modes move away from their TM partners.

\subsection{Damping}
\label{sec:damping}

Each trapped mode is a damped oscillator, by Eq.~\eqref{eq:trappedode}, and the error as a whole disappears only as fast as its slowest mode, which is the lowest one. Shortening the damping time $\tau$ helps only up to a point. Once the lowest mode is critically damped, a still shorter $\tau$ overdamps it, and an overdamped mode creeps back to zero at the smaller rate $1/2\tau-(1/4\tau^{2}-v^{2}\lambda_{1})^{1/2}$. The best choice is therefore critical damping of the lowest mode,
\begin{equation}
\frac{1}{2\tau_{*}}=v\sqrt{\lambda_{1}} ,
\label{eq:bestrate}
\end{equation}
so the best decay rate is the first SL frequency of the cavity, evaluated at the cleaning speed. Equation~\eqref{eq:bestrate} selects the damping time from the slowest trapped mode, the worst case over the scalar data the code may hold. It optimizes the long-time decay rate and gives no bound on the error at a particular instant, and when the lowest mode is absent the slowest populated mode sets the rate. No uniform $\tau$ gives this slowest mode a faster exponential decay than $e^{-v\sqrt{\lambda_{1}}\,t}$. Once that rate is reached, faster decay requires increasing $v$. In the notation of Dedner et al.~\cite{dedner2002}, with cleaning speed $c_{h}$ and damping rate $c_{h}^{2}/c_{p}^{2}$, Eq.~\eqref{eq:bestrate} reads $c_{p}^{2}/c_{h}=1/(2\sqrt{\lambda_{1}})$, a length set by the cavity and not by the mesh, about $0.092\,a$ for a cube of side $a$.
\\

A charge-conservation error in the scheme acts as a source $\Lambda$, adding the term $\Lambda_{j}(t)/\eps$ to Eq.~\eqref{eq:trappedode} as in Eq.~\eqref{eq:driven}. When the error drives a mode at its undamped frequency $\omega=v\sqrt{\lambda_{j}}$, damping limits the response that grows without bound in Section~\ref{sec:excite}, and the amplitude settles at $|\Lambda_{j}|\tau/(\eps\omega)$, proportional to $\tau$. Weak damping lets a large error build up near a trapped frequency, while strong damping makes the lowest modes overdamped and slow to decay. The trapped modes remain unexcited if the scheme preserves the split of Proposition~\ref{prop:trapped}, with $\dive\curl=0$ on the grid, charge-conserving sources and no trapped part initially. Then $\Lambda=0$ throughout the evolution, giving the discrete counterpart of Corollary~\ref{cor:empty}.

\section{Discussion and Conclusions}
\label{sec:disc}
\subsection{What the Results Say}

With $C=0$ on the wall, a closed perfectly conducting cavity carries a family of SL resonances at the Dirichlet frequencies $c\sqrt{\lambda_{j}}$ of the volume, and the classical modes remain unchanged. The Maxwell and SL fields evolve independently. In a cavity of uniform cross-section each SL mode shares its frequency with the TM mode of the same indices that varies along the axis, and whatever the shape the fundamental mode is electromagnetic when the wall is connected.
\\

All of this holds for the condition $C=0$. At a conductor described by its constitutive laws, Section~\ref{sec:extwall}, the matching gives $\partial C/\partial n=0$ instead. Wherever the scalar varies along the wall the scalar and Maxwell fields are coupled, no separated SL mode exists in a uniform cavity, and those coupled modes are left for a separate analysis.
\\

The same equations describe the auxiliary scalar of divergence cleaning in a closed cavity, where the wall condition is part of the numerical formulation and no model of the conductor is needed, and for any cleaning speed and damping time the trapped scalar separates from the Maxwell field. By Corollary~\ref{cor:empty} a charge-conserving source leaves the SL modes unexcited when the scalar is absent initially, which is why ordinary cavities show no sign of them. A solver can be checked against the fields of Section~\ref{sec:canonical}, the levels of Table~\ref{tab:levels} and the exact solution of Eq.~\eqref{eq:benchmark}, and the spectrum fixes the damping time of Eq.~\eqref{eq:bestrate} that gives the slowest trapped mode its fastest long-time decay.

\subsection{Relation to Prior Work}

Rohleder's inequalities between the Maxwell and Dirichlet eigenvalues fix the position of the fundamental, and the decomposition of Costabel and Dauge gives the same separation of the spectrum from a regularized Maxwell problem, Section~\ref{sec:oneproblem}~\cite{rohleder2025,costabeldauge}.
\\

In hyperbolic cleaning~\cite{munz2000,munz2000cpc,dedner2002}, Fornet and Mouysset, building on their earlier analysis~\cite{fornet2012}, show why the initial data matter as well as the wall condition, with a rectangular-guide example in which a zero scalar wall value still admits non-Maxwell waves from nonzero scalar initial data, and an extension that recovers the Maxwell solutions for compatible data and charge-conserving sources~\cite{fornet2022}. Corollary~\ref{cor:empty} is the corresponding statement for the wall condition used here.
\\

The material-interface problem of Minotti and Modanese~\cite{minotti2023} determines the wall condition at a physical conductor, $\partial C/\partial n=0$ on the vacuum side, Section~\ref{sec:extwall}, and the thin layer proposed by one of the authors as its realization is noted there and left to a separate paper.

\newpage

\backmatter

\section*{Statements and Declarations}
\noindent\textbf{Funding}\quad No funding was received for this study.

\medskip
\noindent\textbf{Competing Interests}\quad The authors declare no competing interests.

\medskip
\noindent\textbf{Data Availability}\quad The data generated in this study are the tabulated values in the manuscript, which follow from the equations given.

\medskip
\noindent\textbf{Code Availability}\quad A Python script was used to compute modes and prepare figures. The definitions, derivations, inputs, and conclusions are fully stated in the manuscript and do not depend on access to the script. The Python script is however available upon request.

\smallskip
\noindent\textbf{Author Contributions}\quad N.R. formulated the boundary-value problem, carried out the derivations and computations, and wrote the manuscript. F.M. and G.M. developed the formulation of extended electrodynamics in material media and its energy law on which this work is based, derived the scalar wall condition at a physical conductor, and revised the manuscript. J.H. contributed to the development of the manuscript. A.P. reviewed and analyzed the manuscript.

\FloatBarrier

\end{document}